 \definecolor{mdg}{RGB}{0,177,0} 
 \definecolor{mdb}{RGB}{0,0,191}
 \definecolor{mddb}{RGB}{0,0,91}
 \definecolor{mdy}{RGB}{255,69,0} 
 \definecolor{gray}{RGB}{99,99,99} 
    \definecolor{refkey}{RGB}{255,127,0}
    \definecolor{labelkey}{RGB}{127,0,255}
\newcommand{\boldeqref}[1]{\textbf{(\ref{#1})}} 
\DeclareMathOperator{\linearspan}{span}
\newtheorem{theorem}{Theorem}
\newtheorem{lemma}{Lemma}
\theoremstyle{definition}
\newtheorem{dfn}{Definition}
\newtheorem{cnd}{Condition}
\newtheorem{cnv}{Convention}
\newtheorem{mpt}{Assumption}
\theoremstyle{remark}
\newtheorem{remark}{Remark}
\author{I.G. Korepanov}
\title{Free fermions on a piecewise linear four-manifold. I:~Exotic chain complex}
\date{November 2016}
\begin{document}

\maketitle

\begin{abstract}
Recently, an algebraic realization of the four-dimensional Pachner move 3--3 was found in terms of Grassmann--Gaussian exponentials, and a remarkable nonlinear parameterization for it, going in terms of a $\mathbb C$-valued 2-cocycle. Here we define, for a given triangulated four-dimensional manifold and a 2-cocycle on it, an `exotic' chain complex intimately related to the mentioned parameterization, thus providing a basis for algebraic realizations of all four-dimensional Pachner moves.
\end{abstract}

\section{Introduction}\label{s:intro}

This paper is intended to provide a basis for constructing algebraic realizations of all four-dimensional Pachner moves. The fundamental role in our constructions is played by Grassmann--Gaussian exponentials, that is, exponentials of quadratic forms of anticommuting variables. Such exponentials are usually associated with free-fermionic quantum field theories (QFT), so our theory can be regarded as a topological QFT of this kind.

Eventually, we plan to obtain an invariant of a pair $(M,h)$, where $M$ is a closed\footnote{Recall that `closed' means in this context `compact and without boundary'.} oriented connected four-dimen\-sional piecewise linear (PL) manifold, and $h\in H^2(M,\mathbb C)$ is an element of the second cohomology group with complex numbers as coefficients.\footnote{Strictly speaking, we will need also a chosen basis in~$H^2(M,\mathbb C)$, but given only up to a linear transformation with determinant~$\pm 1$, see Remark~\ref{r:b} below.}

One obvious advantage of working with PL manifolds, as opposed to smooth manifolds, is that we typically deal with a \emph{finite} number of simplices in which the manifold is decomposed (triangulated), and attach also a finite number of quantities to these simplices. In such a \emph{discrete} theory, we will not encounter many difficulties such as appear when we try to define a functional integral on a smooth manifold.\footnote{Our analogue of functional integration will be a \emph{Berezin integral} in anticommuting (Grassmann) variables, it will appear in the next part~\cite{part-II} of this work. Note that, even in our discrete theory, some infinities that must be `renormalized' do occur. Namely, they appear in disguise, as zeros $0=\infty^{-1}$: we will get zero if we try to write, in a na\"ive way, a manifold invariant using the algebraic realization of Pachner move 3--3 from~\cite{full-nonlinear}. The `exotic complex' proposed in the present paper can be treated as a tool for such renormalization, this latter consisting in replacing a vanishing determinant with a Reidemeister-type \emph{torsion}.}

Here we present the first part of our construction, where we assume that $M$ is equipped with one fixed triangulation. Our aim is to build an unusual chain complex, using this triangulation and a 2-cocycle~$\omega$ representing class~$h$. Our invariant will appear in Part~II~\cite{part-II} of this work; it will be the square root of the Reidemeister torsion of this complex~\eqref{c-symb}, multiplied by some correcting factor, and this latter will be the product of same-type factors attached to all triangulation simplices in dimensions 2 and~4.

Of course, an invariant of the pair~$(M,h)$ deserves its name if it does \emph{not} depend on a triangulation of~$M$ or specific cocycle~$\omega \in h$. As we leave the actual definition of our invariant, as well as the proof of its invariance, until paper~\cite{part-II}, we must offer here a convincing argument why the construction presented in this Part~I has something to do with the invariance.

Recall that any triangulation of a manifold can be transformed into another triangulation using a finite sequence of elementary rebuildings called \emph{Pachner moves}~\cite{Pachner}; excellent pedagogical introduction in this theory can be found in the monograph~\cite{Lickorish}. In four dimensions, the `central' move, in some informal sense, is move 3--3: it transforms a cluster of three 4-simplices situated around a 2-face into a cluster of three other 4-simplices, occupying the same place in the triangulation. An explicit \emph{algebraic realization} of move 3--3 was given in papers~\cite{full-nonlinear,gce} using Grassmann and Clifford algebras. Here algebraic realization means a formula whose structure corresponds naturally to the move 3--3, in such way that it gives strong support for further construction of a kind of topological quantum field theory.

Our immediate argument in defense of chain complex~\eqref{c-symb} proposed in this paper is as follows: its construction can be shown to lead to the same, up to a `gauge' freedom (see Subsection~\ref{ss:expl}), \emph{pentachoron weights}~\eqref{W_u} as appear in~\cite{full-nonlinear} and~\cite{gce}\footnote{The \emph{very same} pentachoron weight as in~\cite{gce} appears in the first arXiv version \href{http://arxiv.org/abs/1605.06498v1}{arXiv:1605.06498v1} of this paper. Here we prefer, however, to use another gauge.}. We will actually show in~\cite{part-II} that complex~\eqref{c-symb} provides an invariant of \emph{all} Pachner moves, while papers~\cite{full-nonlinear} and~\cite{gce} deal only with the move 3--3.

\begin{remark}
The existence of our algebraic realization of move 3--3 was first shown, and in a constructive way, in~\cite{KS2}. What was lacking in~\cite{KS2} was the natural \emph{nonlinear parameterization in terms of a 2-cocycle}, found later in~\cite{full-nonlinear}. Once this parameterization is unveiled, further work leads --- again quite naturally --- to the constructions presented in this paper.
\end{remark}

As we have said, we will be dealing here with one chosen 2-cocycle~$\omega$ representing class~$h$. This means that a complex number $\omega_s=\omega_{ijk}$ is given for any oriented triangle~$s$ in our fixed triangulation, with vertices $i,j,k\in s$, such that for any tetrahedron $t=ijkl$
\begin{equation}\label{c}
\omega_{jkl}-\omega_{ikl}+\omega_{ijl}-\omega_{ijk}=0.
\end{equation}
Similarly, all (co)chains in this paper are understood as simplicial (co)chains for our fixed triangulation (although sometimes with peculiar modifications, see Section~\ref{s:t}).

Here and below we use the following notational conventions.

\begin{cnv}\label{cnv:v}
All the vertices (0-simplices) in the triangulation are assumed to be numbered from~$1$ to their total number~$N_0$. Vertices are also denoted by letters $i,j,k,\dots$.
\end{cnv}

\begin{cnv}\label{cnv:n}
We denote triangles (2-simplices) by the letter~$s$, tetrahedra (3-simplices) by~$t$, and pentachora (4-simplices) --- by~$u$. As for edges (1-simplices), we tend to use letter~$b$ for them. If needed, we also number $d$-dimensional simplices from~$1$ to their total number~$N_d$.
\end{cnv}

\begin{cnv}\label{cnv:o}
We also write simplices by their vertices, e.g., $s=ijk$ or $u=12345$. In writing so, we assume that the simplex has the \emph{orientation} determined by the vertex order. Also, minus sign (as in `$-t$', etc.) means reversing the orientation; in this sense, $ijk=-jik$, etc.
\end{cnv}

Also, we adopt the following convention about cocycle~$\omega$.

\begin{cnv}\label{cnv:g}
Cocycle~$\omega$ is \emph{generic} in the following sense: statements and equalities in this paper involving~$\omega$ (like ``rank of a certain matrix is~5'') actually hold for almost all (${}={}$Zariski open set of) cocycles. The denominators in such formulas as~\eqref{Ftt'} don't vanish for a generic~$\omega$ either.
\end{cnv}

Below,
\begin{itemize}\itemsep 0pt
 \item in Section~\ref{s:GB}, we recall some facts from the Grass\-mann--Berezin calculus of anticommuting variables that we will be using;
 \item in Section~\ref{s:t}, we begin the definition of our exotic complex: we define its vector spaces, their bases, and a (simply defined) morphism~$f_1$. Due to the symmetry of the complex, there remain only two morphisms to be defined, called $f_2$ and~$f_3$;
 \item in Section~\ref{s:f2}, we define~$f_2$. It leads, together with its `companion mapping'~$g_2$, to an important notion of `edge operators';
 \item in Section~\ref{s:long}, we do more work on~$f_2$, leading to explicit formulas for its matrix elements;
 \item in Section~\ref{s:f3}, we define morphism~$f_3$, prove the chain property for our exotic complex, and calculate matrix elements of~$f_3$;
 \item finally, in Section~\ref{s:d}, we discuss what has been done in this paper, and plans for its parts II and~III.
\end{itemize}

\section{Some simple facts from Grassmann--Berezin calculus of anticommuting variables}\label{s:GB}

\subsection{Grassmann algebras}\label{ss:GB}

\begin{dfn}\label{dfn:GA}
In this paper, \emph{Grassmann algebra} will mean a finite-dimen\-sional associative algebra over the field~$\mathbb C$ of complex numbers, with unity, generators~$\vartheta _i$, and relations
\begin{equation}\label{xixj}
\vartheta _i \vartheta _j = -\vartheta _j \vartheta _i .
\end{equation}
\end{dfn}

Generators~$\vartheta _i$ will also be called \emph{Grassmann variables}.\footnote{We thus switch from the notation~$x_i$ for Grassmann variables, used in our previous works and earlier by Berezin~\cite{B} as well as Chevalley~\cite{C}, to $\vartheta _i$ that seems to be more common now.}

As~\eqref{xixj} implies that $\vartheta _i^2 =0$, each element of a Grassmann algebra is a polynomial of degree $\le 1$ in each~$\vartheta _i$.

The \emph{degree} of a Grassmann monomial is its total degree in all Grassmann variables. An element of Grassmann algebra consisting of monomials of only odd or only even degrees is called \emph{odd} or \emph{even}, respectively. If all monomials have degree~2, such element is called a \emph{quadratic form}.

The \emph{exponential} is defined by its usual Taylor series.

\begin{dfn}\label{dfn:GGe}
A \emph{Grassmann--Gaussian exponential} is the exponential of a quadratic form.
\end{dfn}

\begin{dfn}\label{dfn:der}
The (left) \emph{derivative} --- we also call it \emph{differentiation} --- with respect to a Grassmann variable~$\vartheta _i$, denoted $\dfrac{\partial}{\partial \vartheta _i}$ or simply~$\partial_i$, is a $\mathbb C$-linear operation in Grassmann algebra, satisfying also the following conditions:
\[
\partial_i 1=0, \qquad \partial_i \vartheta _i =1, \qquad \partial_i \vartheta _j =0 \text{ \ for \ }i\ne j,
\]
and the \emph{Leibniz rule}: for $f$ either even or odd,
\begin{equation}\label{Leibniz}
\partial_i (fg) = (\partial_i f) g + \epsilon f\partial_i g,
\end{equation}
where $\epsilon=1$ for an even~$f$ and $\epsilon=-1$ for an odd~$f$.
\end{dfn}

\subsection{Clifford algebra generated by differentiations w.r.t.\ and multiplications by Grassmann variables}\label{ss:C}

Let there be a Grassmann algebra with~$n$ generators~$\vartheta _i$; we denote $U=\linearspan_{\mathbb C}\{\vartheta _i\}$ the linear space spanned by them. As $\partial_i\vartheta _j$ is always a number, differentiations are identified with \emph{linear forms} on~$U$, hence the dual space $U^*=\linearspan_{\mathbb C}\{\partial_i\}$. Below we also identify elements of~$U$ with ($\mathbb C$-) linear operators --- left multiplications by these elements in our Grassmann algebra.

Of fundamental importance for us will be the space $\mathcal V=U\oplus U^*$, consisting thus of linear operators
\begin{equation}\label{bg}
d = \sum_{t=1}^n (\beta_t\partial_t + \gamma_t\vartheta _t), \qquad \beta_t,\gamma_t\in\mathbb C.
\end{equation}

The anticommutator of two operators~\eqref{bg} (defined as $[A,B]_+=AB+BA$ for operators $A$ and~$B$) has only scalar part; we call this their \emph{scalar product}:
\begin{equation}\label{scal}
\langle d^{(1)} \mid d^{(2)}\rangle \stackrel{\rm def}{=} \sum_{t=1}^n (\beta_t^{(1)}\gamma_t^{(2)} + \beta_t^{(2)}\gamma_t^{(1)} ).
\end{equation}
With this scalar product, $\mathcal V$ becomes a \emph{complex Euclidean space}, while all polynomials of operators~\eqref{bg} form a \emph{Clifford algebra}.

Separate summands in \eqref{bg} and~\eqref{scal} will also be of use for us, so we introduce \emph{$t$-components}
\begin{equation}\label{scs}
d|_t \stackrel{\rm def}{=} \beta_t\partial_t+\gamma_t \vartheta _t
\end{equation}
of operator~$d$, and \emph{partial scalar products}
\begin{equation}\label{sct}
\langle d^{(1)}\mid d^{(2)}\rangle_t \stackrel{\rm def}{=} \beta_t^{(1)}\gamma_t^{(2)} + \beta_t^{(2)}\gamma_t^{(1)}.
\end{equation}
Also, we denote~$\mathcal V_t$ the two-dimensional linear space spanned by $\partial_t$ and~$\vartheta _t$; it is clear that $\mathcal V = \bigoplus_{t=1}^n \mathcal V_t$ in the sense of complex Euclidean spaces.

\subsection{Maximal isotropic spaces of operators}\label{ss:max-iso}

\begin{dfn}\label{dfn:iso}
Subspace~$V$ of a complex Euclidean space is called \emph{isotropic} if the scalar product restricted onto~$V$ identically vanishes.
\end{dfn}

Especially interesting for us are maximal --- in the sense of inclusion --- isotropic subspaces. Two simplest examples of maximal isotropic subspaces are $U\subset \mathcal V$ and $U^*\subset \mathcal V$ in the previous Subsection~\ref{ss:C}.

The following Theorems \ref{th:eo} and~\ref{th:v} summarize some well-known facts about maximal isotropic subspaces, skew-symmetric matrices, and Grassmann--Gaussian exponentials. These (and many other interesting) facts were known already to Chevalley~\cite{C}; later essentially the same mathematical structures were studied by Ab{\l}amowicz and Lounesto~\cite{AL} and Fauser~\cite{Fauser}. Also, Theorems \ref{th:eo} and~\ref{th:v} can be regarded as light-weight versions of~\cite[Theorems 1 and~2]{full-nonlinear}, where the reader can find simple proofs.

\begin{theorem}\label{th:eo}
Let $\vartheta _t$, \ $t=1,\ldots,n$, be generators of a Grassmann algebra over~$\mathbb C$, and $\mathcal V$ the $2n$-dimensional complex Euclidean space of operators~\eqref{bg}. Let also $\mathsf p=\begin{pmatrix}\partial_1 & \dots & \partial_n\end{pmatrix}^{\mathrm T}$ and\/ $\uptheta =\begin{pmatrix}\vartheta _1 & \dots & \vartheta _n\end{pmatrix}^{\mathrm T}$ be the columns of partial derivatives and corresponding variables. Then,
\begin{enumerate}\itemsep 0pt
\item \label{i:l} for a skew-symmetric matrix~$F$, the span of the elements of column $\mathsf p + F \uptheta $ is a maximal isotropic subspace,
\item \label{i:f} conversely, if there is an isotropic subspace spanned by operators of the form
\begin{equation}\label{pFx}
 \begin{array}{c}
\partial_1+F_{11}\vartheta _1+\dots+F_{1n}\vartheta _n,\\
\dotfill \\
\partial_n+F_{n1}\vartheta _1+\dots+F_{nn}\vartheta _n,
 \end{array}
\end{equation}
then it is maximal, and numbers~$F_{ij}$ form a skew-symmetric matrix.
\end{enumerate}
\qed
\end{theorem}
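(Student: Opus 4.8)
The plan is to reduce both parts of the statement to a single scalar-product computation together with the standard dimension bound for isotropic subspaces. First I would name the $n$ operators in~\eqref{pFx} as $d^{(i)}=\partial_i+\sum_t F_{it}\vartheta _t$, so that in the notation of~\eqref{bg} they have coefficients $\beta^{(i)}_t=\delta_{it}$ and $\gamma^{(i)}_t=F_{it}$. Substituting into~\eqref{scal} gives at once
\[
\langle d^{(i)}\mid d^{(j)}\rangle=\sum_{t=1}^n\bigl(\delta_{it}F_{jt}+\delta_{jt}F_{it}\bigr)=F_{ij}+F_{ji}.
\]
Consequently, for a combination $\sum_i c_i d^{(i)}$ one has $\langle\,\cdot\mid\cdot\,\rangle=\sum_{i,j}c_ic_j(F_{ij}+F_{ji})$, and this quadratic form vanishes identically precisely when $F_{ij}=-F_{ji}$ for all $i,j$, i.e.\ when $F$ is skew-symmetric. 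This one equivalence already delivers the isotropy assertion in~\ref{i:l} and the skew-symmetry assertion in~\ref{i:f}.

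For the two maximality claims, the key point is that the operators $d^{(i)}$ are linearly independent: a vanishing combination $\sum_i c_i d^{(i)}$ has $U^*$-part $\sum_i c_i\partial_i$, which forces all $c_i=0$. Hence in both \ref{i:l} and \ref{i:f} we are dealing with an isotropic subspace of $\mathcal V$ of dimension exactly~$n$. It then remains to invoke the general fact that in the $2n$-dimensional complex Euclidean space $\mathcal V$ --- whose scalar product~\eqref{scal} is visibly nondegenerate, as it pairs each $\partial_t$ with $\vartheta _t$ --- every isotropic subspace has dimension at most~$n$: if $V$ is isotropic then $V\subseteq V^{\perp}$, while nondegeneracy gives $\dim V^{\perp}=2n-\dim V$, whence $\dim V\le n$. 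An $n$-dimensional isotropic subspace thus cannot be properly contained in a larger one, so it is maximal; this settles maximality in both~\ref{i:l} and~\ref{i:f}.

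I do not expect a genuine obstacle here; the only steps deserving an explicit line are the nondegeneracy of~\eqref{scal} and the $\dim\le n$ bound, both immediate. If one prefers a self-contained derivation rather than quoting~\eqref{scal}, the scalar-product formula above can instead be obtained directly from the Clifford anticommutation relations $[\partial_i,\partial_j]_+=[\vartheta _i,\vartheta _j]_+=0$ and $[\partial_i,\vartheta _j]_+=\delta_{ij}$ --- this is exactly the computation behind the earlier observation that the anticommutator of two operators of the form~\eqref{bg} is a scalar.
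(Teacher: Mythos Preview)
Your argument is correct. The scalar-product computation $\langle d^{(i)}\mid d^{(j)}\rangle=F_{ij}+F_{ji}$ immediately yields the equivalence between isotropy of the span and skew-symmetry of~$F$, and the dimension count against the standard bound $\dim V\le n$ (from $V\subseteq V^{\perp}$ and nondegeneracy of~\eqref{scal}) handles maximality in both directions. Nothing is missing.

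As for comparison with the paper: there is nothing to compare against. The paper does not supply its own proof of this theorem; it is stated with a bare \qed, and the surrounding text explicitly refers the reader to~\cite[Theorems 1 and~2]{full-nonlinear} for ``simple proofs''. Your write-up is exactly the kind of short self-contained verification one would expect for such a citation, and would serve perfectly well in its place.
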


\begin{theorem}\label{th:v}
Let $\vartheta _t$, \ $t=1,\ldots,n$, be generators of a Grassmann algebra over~$\mathbb C$, and $V \subset \mathcal V$ a maximal isotropic subspace in the space of operators~\eqref{bg}. Then,
\begin{enumerate}\itemsep 0pt
 \item\label{i:t} the nullspace of~$V$ (${}={}$the space of vectors annihilated by all elements in~$V$) is one-dimensional,
 \item\label{i:u} if $V$ is a subspace of the kind dealt with in items \ref{i:l} and~\ref{i:f} of Theorem~\ref{th:eo}, i.e.,
  \begin{equation}\label{Vs}
V = \bigl(\text{span of the elements of column }(\mathsf p + F \uptheta )\bigr),
  \end{equation}
then the nullspace of\/~$V$ is spanned by the Grassmann--Gaussian exponential
  \begin{equation}\label{xFx}
  \mathcal W = \exp \left(-\frac{1}{2}\,\uptheta ^{\mathrm T} F \uptheta  \right).
  \end{equation}
\end{enumerate}
\qed
\end{theorem}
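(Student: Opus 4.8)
For part \ref{i:u} I would argue by a direct computation, and use it as the model case for part \ref{i:t}. First I would verify that $\mathcal W$ lies in the nullspace. Set $Q=-\frac12\uptheta^{\mathrm T}F\uptheta$; since $Q$ is even it is central in the Grassmann algebra, so applying the Leibniz rule \eqref{Leibniz} to $Q^m$ and using centrality gives $\partial_i\exp Q=(\partial_i Q)\exp Q$, while a short computation of $\partial_i(\vartheta_j\vartheta_k)$ together with the skew-symmetry $F_{ji}=-F_{ij}$ gives $\partial_i Q=-\sum_j F_{ij}\vartheta_j$. Hence $\bigl(\partial_i+\sum_j F_{ij}\vartheta_j\bigr)\mathcal W=0$ for every $i$, so every operator in the span \eqref{Vs} annihilates $\mathcal W$, and $\mathcal W\ne 0$ because its constant term is $1$; granting part \ref{i:t}, the nullspace is then exactly $\mathbb C\,\mathcal W$. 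One also gets the inclusion ``$\subseteq$'' directly: $\mathcal W$ is invertible, so any nullspace element can be written $x=\mathcal W y$, and the Leibniz rule gives $\bigl(\partial_i+\sum_j F_{ij}\vartheta_j\bigr)x=\mathcal W\,\partial_i y$, forcing $\partial_i y=0$ for all $i$, i.e.\ $y$ a scalar --- this already proves part \ref{i:t} in the special case \eqref{Vs}.

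For the general case of part \ref{i:t}, I would fix a maximal isotropic $V$, write $N=\{x:vx=0\ \forall v\in V\}$, choose a maximal isotropic $V'$ transverse to $V$ (one exists), and pick dual bases $v_1,\dots,v_n\in V$, $v'_1,\dots,v'_n\in V'$ with $\langle v_i\mid v_j\rangle=\langle v'_i\mid v'_j\rangle=0$, $\langle v_i\mid v'_j\rangle=\delta_{ij}$, so the $v_i,v'_i$ obey the same Clifford relations \eqref{scal} as the $\partial_i,\vartheta_i$. For $\dim N\ge 1$ I would exhibit $Q=v_1v'_1\cdots v_nv'_n$: each factor $v_iv'_i$ is idempotent, the factors commute pairwise, $v_iQ=0$, and $Q\ne 0$ since it is a PBW-type basis monomial of the Clifford algebra; hence $Q\Lambda$ is a nonzero subspace of $N$, where $\Lambda$ denotes the $2^n$-dimensional Grassmann algebra. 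For $\dim N\le 1$ I would use that $\Lambda$ is the \emph{unique irreducible} module over the Clifford algebra $\mathrm{Cl}(\mathcal V)\cong\mathrm{Mat}_{2^n}(\mathbb C)$: for $0\ne x\in N$, pushing every $v_i$ to the right past the $v'_j$ (the correction terms also end in some $v_k$ and hence kill $x$) shows $\mathrm{Cl}(\mathcal V)\cdot x\subseteq\linearspan\{\text{monomials in the }v'_j\}\cdot x$, and irreducibility makes the right side all of $\Lambda$; comparing dimensions, $m\mapsto mx$ is a linear isomorphism of $\linearspan\{\text{monomials in the }v'_j\}$ onto $\Lambda$. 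Then for $y\in N$ I would write $y=mx$ and push $v_i$ to the right in $v_iy=0$ to conclude that the monomial obtained from $m$ by deleting $v'_i$ annihilates $x$, hence (by injectivity) vanishes, hence $v'_i$ does not occur in $m$; as this holds for all $i$, $m$ is a scalar and $y\in\mathbb C x$.

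The one substantive input --- and the main obstacle --- is the structural isomorphism $\mathrm{Cl}(\mathcal V)\cong\mathrm{Mat}_{2^n}(\mathbb C)$, equivalently that $\Lambda$ is a faithful and therefore unique simple module; this is exactly what lets the monomials in the $v'_j$ act on $x$ freely and transitively. An alternative packaging of part \ref{i:t} uses the same fact via Witt's extension theorem, to carry the transverse pair $(V,V')$ to $(U^*,U)$ by an element of $O(2n,\mathbb C)$, and Skolem--Noether, to lift that element to an intertwiner of $\Lambda$ sending $N$ onto the constants. Everything else is bookkeeping: the signs in $\partial_i(\vartheta_j\vartheta_k)$, and taking care to apply the Leibniz rule \eqref{Leibniz} only to homogeneous arguments.
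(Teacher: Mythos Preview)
The paper does not actually supply a proof of this theorem: it is stated with a bare \qed, described as a ``light-weight version'' of results in~\cite{full-nonlinear} and classical facts going back to Chevalley~\cite{C}. So there is no paper proof to compare against; I can only assess your argument on its own merits.

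Your treatment of item~\ref{i:u} is clean and correct. The key identity $\bigl(\partial_i+\sum_j F_{ij}\vartheta_j\bigr)(\mathcal W y)=\mathcal W\,\partial_i y$ is exactly right (using that $\mathcal W$ is even and invertible), and it immediately gives both the containment $\mathcal W\in N$ and the one-dimensionality in the special case~\eqref{Vs}. This is almost certainly the ``simple proof'' alluded to in~\cite{full-nonlinear}.

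For the general item~\ref{i:t} your argument is also sound, resting on the structural fact $\mathrm{Cl}(\mathcal V)\cong\mathrm{Mat}_{2^n}(\mathbb C)$, which you correctly flag as the substantive input. Two small remarks. First, your sentence ``the correction terms also end in some $v_k$ and hence kill~$x$'' is not literally accurate: the commutator $[v_i,v'_j]_+=\delta_{ij}$ produces a \emph{scalar} correction, not a term ending in a~$v_k$. What you really want to say is that after full normal ordering (all $v$'s to the right of all $v'$'s), every PBW monomial either has a nontrivial $v$-tail---which kills~$x$---or is a pure $v'$-monomial; your conclusion $\mathrm{Cl}(\mathcal V)\cdot x\subseteq\linearspan\{v'\text{-monomials}\}\cdot x$ is correct. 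Second, in the last step you write ``$y=mx$'' with $m$ a single monomial, but $m$ is a priori a linear combination $\sum_S c_S\,v'_S$; the argument still goes through because $v_i\,v'_S\,x=\pm\,v'_{S\setminus\{i\}}\,x$ for $i\in S$ and $0$ otherwise, and injectivity of $m\mapsto mx$ then forces $c_S=0$ whenever $S\ne\varnothing$.

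A somewhat lighter route to the general~\ref{i:t}, avoiding explicit appeal to simplicity of the spinor module, is to reduce to the already-proved special case by a change of Grassmann generators: after a linear change of the~$\vartheta_t$ (and dually of the~$\partial_t$) one may assume $V\cap U=\linearspan(\vartheta_1,\dots,\vartheta_k)$, and then a partial Grassmann Fourier transform in the first $k$ variables swaps those $\vartheta_i\leftrightarrow\partial_i$ and brings $V$ into the graph form~\eqref{Vs}. Your Witt--plus--Skolem--Noether packaging is the abstract version of the same idea. Either way the content is the uniqueness of the vacuum in the Fock/spin representation, which is exactly what Chevalley proves.
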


\section{Exotic complex: the simpler part of its definition}\label{s:t}

Our ``exotic'' chain complex will be made of based --- that is, equipped with distinguished bases --- $\mathbb C$-linear spaces and their morphisms, identified naturally with matrices. It will have the following form:
\begin{multline}\label{c-symb}
0 \longrightarrow \mathbb C \stackrel{f_1}{\longrightarrow} Z^2(M,\mathbb C) \stackrel{f_2}{\longrightarrow} \tilde C_3(M,\mathbb C) \stackrel{f_3=-f_3^{\mathrm T}}{\longrightarrow} \tilde C^3(M,\mathbb C) \\
\stackrel{f_4=f_2^{\mathrm T}}{\longrightarrow} \bigl(Z^2(M,\mathbb C)\bigr)^* \stackrel{f_5=f_1^{\mathrm T}}{\longrightarrow} \mathbb C \longrightarrow 0.
\end{multline}
The right half of this complex is, in a sense, a mirror image of its left half: the fourth, fifth and sixth spaces are, by definition, \emph{dual} to the third, second and first space, respectively, and morphisms between these dual spaces are given by the transposed matrices of ``initial'' morphisms, as is indicated above the arrows in~\eqref{c-symb}. Note that matrix~$f_3$ --- central not only in complex~\eqref{c-symb} but in our whole theory --- is \emph{skew-symmetric}.

In this Section, we present the description of the spaces taking part in this complex, their bases, and morphism~$f_1$. Morphisms $f_2$ and~$f_3$ are more complicated, and their description is developed in Sections \ref{s:f2}--\ref{s:f3} below.

\paragraph{Linear spaces in complex~\boldeqref{c-symb}.}
The second (nonzero) space $Z^2(M,\mathbb C)$ consists of all 2-cocycles on~$M$. The third space needs more explanation: it is a $\mathbb C$-linear space with the set of all \emph{unoriented} tetrahedra (in our fixed triangulation of~$M$) making its basis. So, the third space consists of ``unoriented 3-chains'', which is marked by a tilde above the usual letter~$C$.

\paragraph{Bases of linear spaces in complex~\boldeqref{c-symb}.}
We will need bases in our vector spaces in order to define a Reidemeister-type \emph{torsion}\footnote{This will be done in the next part of this work.}. This allows us to define the bases up to the following equivalence: two bases belong, by definition, to the same equivalence class, provided the determinant of their transition matrix is $\pm 1$. Also, the bases in spaces in the right half of~\eqref{c-symb} are, by definition, \emph{dual} to the bases in the first three spaces. So, we now define these three bases.

The first space~$\mathbb C$ in~\eqref{c-symb} is identified with the one-dimen\-sional space of 2-cocycles spanned by our given cocycle~$\omega$, and one vector~$\omega$ forms its basis. Note that $\omega$ is not identical zero, due to Convention~\ref{cnv:g}.

Basis in the second space~$Z^2(M,\mathbb C)$ is composed, by definition, from a basis in the space~$B^2(M,\mathbb C)$ of \emph{coboundaries}, and a pullback of some chosen basis in~$H^2(M,\mathbb C)$.

\begin{remark}\label{r:b}
We need a basis in~$H^2(M,\mathbb C)$, for our construction, of course only up to the same equivalence as stated above. If $H^2(M,\mathbb C)=0$, then empty basis works well, but otherwise, a class of bases must be chosen from some considerations.
\end{remark}

As for constructing the basis in~$B^2(M,\mathbb C)$, we will restrict ourself, in this paper, to the case formulated in the following Assumption.

\begin{mpt}\label{mpt:h}
Manifold~$M$ is \emph{simply connected}.
\end{mpt}

Assumption~\ref{mpt:h} implies that $H^1(M,\mathbb C)=0$, that is,
\begin{equation}\label{in}
B^2(M,\mathbb C)=C^1(M,\mathbb C)/Z^1(M,\mathbb C)=C^1(M,\mathbb C)/B^1(M,\mathbb C).
\end{equation}
Taking some liberty, we denote in this paper a 1-cochain taking value~$1$ on an edge~$b$ and vanishing on all other edges, simply by the same letter~$b$. Below, we denote by~$\delta$ the coboundary operator acting on $\mathbb C$-cochains on our triangulation.

\begin{remark}\label{r:d}
As is common in algebraic topology, we use the same letter~$\delta$ for coboundaries of cochains of all dimensions; this will not lead to a confusion.
\end{remark}

By definition, basis in~$B^2(M,\mathbb C)$ consists of elements~$\delta b$ for all $b\in \mathsf B$, where $\mathsf B$ is such a subset in the set of all edges that its complement~$\overline{\mathsf B}$ consists of all edges in a maximal tree in the 1-skeleton of our triangulation. Recall that 1-skeleton of a triangulation is the graph made of all its vertices and edges. 

\begin{remark}
Recalling also notation~$N_d$ from our Convention~\ref{cnv:n}, we can see that $\overline{\mathsf B}$ is a set of $N_0-1$ edges joining all the triangulation vertices. Hence, $\mathsf B$ contains the remaining $N_1-N_0+1$ edges.
\end{remark}

\begin{lemma}\label{l:edges}
\begin{enumerate}\itemsep 0pt
 \item\label{i:exact} The following sequence is exact:
\begin{equation}\label{io}
0\to \mathbb C\stackrel{\iota}{\to} C^0(M,\mathbb C)\stackrel{\delta}{\to} C^1(M,\mathbb C)\stackrel{\delta}{\to} B^2(M,\mathbb C)\to 0.
\end{equation}
Here $\iota$ sends a number $z\in\mathbb C$ to the cochain taking value~$z$ on every vertex in the triangulation (and $\delta$ was defined right above Remark~\ref{r:d}, see also this Remark itself).
 \item\label{i:B} Elements~$\delta b$ for $b\in \mathsf B$ constitute a basis in~$B^2(M,\mathbb C)$.
 \item\label{i:M} For a different choice of\/~$\mathsf B$ (but such that~$\overline{\mathsf B}$ still consists of all edges of a maximal tree), the basis belongs to the same class.
\end{enumerate}
\end{lemma}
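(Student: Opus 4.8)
The plan is to deduce all three items from one structural fact: the maximal tree splits off a complement of the coboundaries inside the $1$-cochains, namely
\[
 C^1(M,\mathbb C)=\langle\mathsf B\rangle\oplus B^1(M,\mathbb C),
\]
where $\langle\mathsf B\rangle$ denotes the span of the edges in $\mathsf B$, and the very same splitting holds verbatim over~$\mathbb Z$. To obtain it, note that $C^1=\langle\mathsf B\rangle\oplus\langle\overline{\mathsf B}\rangle$ tautologically (all edges form a basis), and consider the projection $p\colon C^1\to\langle\overline{\mathsf B}\rangle$, i.e.\ restriction of a cochain to the edges of the maximal tree. Its restriction to $B^1$ is injective, because a coboundary $\delta\varphi$ vanishing on every edge of the tree forces $\varphi$ to be constant (the tree joins all vertices), hence $\delta\varphi=0$; and it is surjective, because any prescription of values on the tree edges is integrated to a $0$-cochain $\varphi$ along tree paths from a chosen root — and an integral prescription produces an integral~$\varphi$. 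So $p|_{B^1}$ is an isomorphism onto $\langle\overline{\mathsf B}\rangle$, and the elementary fact ``if $W=W_1\oplus W_2$ and the projection to $W_2$ restricts to an isomorphism on a submodule $S$, then $W=W_1\oplus S$'' gives the claimed decomposition, over $\mathbb Z$ as well as over $\mathbb C$.

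Item~(i) is then routine. Injectivity of $\iota$ is clear; $\ker(\delta\colon C^0\to C^1)=Z^0=H^0(M,\mathbb C)$ equals the constants since $M$ is connected, which is $\operatorname{im}\iota$; $\ker(\delta\colon C^1\to B^2)=Z^1(M,\mathbb C)$ equals $B^1(M,\mathbb C)=\operatorname{im}(\delta\colon C^0\to C^1)$ because $H^1(M,\mathbb C)=0$ by Assumption~\ref{mpt:h}; and $\delta\colon C^1\to B^2$ is onto by the definition of~$B^2$. For item~(ii), combine the splitting with item~(i): $\delta$ kills $B^1$ and, by~(i), has kernel exactly $B^1$, so it restricts to an isomorphism $\langle\mathsf B\rangle\xrightarrow{\ \sim\ }\delta(C^1)=B^2$, carrying the basis $\{b:b\in\mathsf B\}$ onto $\{\delta b:b\in\mathsf B\}$; the latter is therefore a basis of~$B^2$.

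For item~(iii), repeat the argument over~$\mathbb Z$: the decomposition $C^1(M,\mathbb Z)=\langle\mathsf B\rangle_{\mathbb Z}\oplus B^1(M,\mathbb Z)$ shows that $\{b:b\in\mathsf B\}$ is a $\mathbb Z$-basis of a $\mathbb Z$-module complement of $B^1(M,\mathbb Z)$ in $C^1(M,\mathbb Z)$, hence, applying $\delta$, a $\mathbb Z$-basis of the free $\mathbb Z$-module $C^1(M,\mathbb Z)/B^1(M,\mathbb Z)$; the same holds for $\mathsf B'$. Two $\mathbb Z$-bases of one free $\mathbb Z$-module differ by a matrix in $\mathrm{GL}(\mathbb Z)$, whose determinant is $\pm1$, and after tensoring with $\mathbb C$ this is exactly the transition matrix between $\{\delta b:b\in\mathsf B\}$ and $\{\delta b':b'\in\mathsf B'\}$ in $B^2(M,\mathbb C)$.

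The main obstacle is precisely item~(iii): it is not enough that the transition matrix be invertible (which is automatic), one needs it unimodular, and this is what forces the bookkeeping to be carried out over~$\mathbb Z$ — the delicate point being the \emph{integral} surjectivity of $p|_{B^1}$, i.e.\ that a $\mathbb Z$-valued prescription on the tree edges integrates to a $\mathbb Z$-valued $0$-cochain. An alternative, lattice-free route for~(iii) would reduce to the case of two maximal trees differing by a single edge exchange (any two spanning trees of a connected graph are joined by such moves) and then check that, in a suitable ordering of $\mathsf B$ and $\mathsf B'$, the transition matrix is triangular with $\pm1$ on the diagonal, using the fundamental cut cocycle of the exchanged edge; I would keep this in reserve as a backup.
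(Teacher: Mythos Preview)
Your proof is correct, and it takes a genuinely different route from the paper's. The paper argues items~\ref{i:B} and~\ref{i:M} via Reidemeister torsion of the exact sequence~\eqref{io}: it observes that the submatrix~$Q$ of the middle~$\delta$ corresponding to the tree edges~$\overline{\mathsf B}$ has maximal rank~$N_0-1$, and then invokes the fact that an exact sequence of based spaces has a well-defined nonzero torsion (citing Turaev) to conclude that the complementary submatrix~$R$ of the right~$\delta$ is nondegenerate; for~\ref{i:M} it notes that the maximal minors of~$Q$ are all~$\pm 1$, so~$\det R$ is independent of the choice of tree up to sign.

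You instead build everything on the explicit splitting $C^1=\langle\mathsf B\rangle\oplus B^1$ obtained by integrating along the tree, and for~\ref{i:M} you pass to the integral lattice: both $\{\delta b\}$ and $\{\delta b'\}$ are $\mathbb Z$-bases of $C^1(M,\mathbb Z)/B^1(M,\mathbb Z)$, hence related by a matrix in $\mathrm{GL}(\mathbb Z)$. Your approach is more elementary and self-contained---no torsion theory is needed, and the ``delicate point'' you flag (integral surjectivity of $p|_{B^1}$) is indeed the crux, handled cleanly by tree-path integration. The paper's approach, by contrast, foreshadows the torsion framework that will be used in Part~II, so it is thematically apt even if less direct. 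Note also that your argument for~\ref{i:M} does not actually need $H^1(M,\mathbb Z)=0$: since $\delta$ factors through $C^1/B^1$, the $\mathrm{GL}(\mathbb Z)$ transition between the two $\mathbb Z$-bases of $C^1(\mathbb Z)/B^1(\mathbb Z)$ is automatically the transition matrix between the $\{\delta b\}$'s, regardless of whether $Z^1=B^1$ over~$\mathbb Z$.
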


\begin{proof}
\begin{itemize}\itemsep 0pt
 \item[\ref{i:exact}] Sequence~\ref{io} is exact due to the connectedness of~$M$ and~\eqref{in}.
 \item[\ref{i:B}] Consider submatrix~$Q$ of the central~$\delta$ in~\eqref{io}, corresponding to a subset of edges forming maximal tree~$\overline{\mathsf B}$. Clearly, $Q$ has the maximal rank, namely~$N_0-1$ (see notational Convention~\ref{cnv:v}). As the exact sequence~\eqref{io} must have a finite Reidemeister torsion, submatrix~$R$ of the right~$\delta$, corresponding to the remaining set~$\mathsf B$ of edges, must be (square and) nondegenerate (compare~\cite[Subsection~2.1]{Turaev}), and this proves this item.
 \item[\ref{i:M}] The determinant of~$Q$ minus any one column is clearly~$\pm 1$, for any~$\mathsf B$. This implies that $\det R$ is also always the same, up to at most a sign (compare again~\cite[Subsection~2.1]{Turaev}).
\end{itemize}
\end{proof}

\paragraph{Morphism~$\boldsymbol{f_1}$.}
Mapping~$f_1$ is, by definition, the obvious inclusion. In matrix form, it is simply the column consisting of the coordinates of our cocycle~$\omega$ in the basis of space~$Z^2(M,\mathbb C)$ described above (the description includes item~\ref{i:B} in Lemma~\ref{l:edges}). This way,
\begin{itemize}\itemsep 0pt
 \item $\dim B^2(M,\mathbb C)$ entries in the column~$f_1$ correspond to \emph{edges} $b\in\mathsf B$; we denote such elements~$\nu_b$,
 \item the remaining $\dim H^2(M,\mathbb C)$ elements correspond to cocycles $z\in Z^2(M,\mathbb C)$ whose classes form a basis~$\mathsf b$ in~$H^2(M,\mathbb C)$; we denote such elements~$\nu_z$,
 \item that the values~$\nu_b$ and~$\nu_z$ are the coordinates of~$\omega$ means that
\[
\sum_{b\in \mathsf B} \nu_b\, \delta b + \sum_{z\in \mathsf b} \nu_z\, z = \omega.
\]
\end{itemize}

\section[Morphism~$f_2$]{Morphism~$\boldsymbol{f_2}$}\label{s:f2}

\subsection[Definition of~$f_2$ and its companion mapping~$g_2$]{Definition of~$\boldsymbol{f_2}$ and its companion mapping~$\boldsymbol{g_2}$}\label{ss:d2}

To define morphism~$f_2\colon\; Z^2(M,\mathbb C)\to \tilde C_3(M,\mathbb C)$, we will need one more linear mapping
\[
g_2\colon\quad Z^2(M,\mathbb C)\to C_3(M,\mathbb C),
\]
that sends thus 2-cocycles to usual 3-chains of \emph{oriented} tetrahedra.

\begin{remark}
We use the notation~`$g_2$' to emphasize its intimate relation to morphism~$f_2$. We are not, however, going to introduce any~$g_i$ with $i\ne 2$.
\end{remark}

Let there be given a cocycle $c\in Z^2(M,\mathbb C)$ and a pentachoron~$u$ in the triangulation of~$M$, with 3-faces $t_1,\ldots,t_5 \subset u$. As $f_2(c)$ is a formal sum of unoriented tetrahedra, it has some coefficients $\beta_1,\ldots,\beta_5$ at $t_1,\ldots,t_5$. Similarly, $g_2(c)$ has some coefficients $\gamma_1,\ldots,\gamma_5$ before these tetrahedra; this time, of course, they need orientation, so we assume that it is induced from~$u$.

Mappings $f_2$ and~$g_2$ are, by definition, determined together by the following conditions.

\begin{cnd}\label{cnd:chain}
\emph{Both} following morphism compositions vanish:
\[
f_2\circ f_1=0,\qquad g_2\circ f_1=0.
\]
\end{cnd}

\begin{cnd}\label{cnd:edge}
If $c=\delta b$ for some edge~$b$, or 2-cochain $c$ just coincides with~$\delta b$ ``locally'' --- on all 2-faces of pentachoron~$u$, then \emph{only those} (three of five) $\beta_{t_i}$ can be nonzero for which $b\subset t_i$. The same applies also to~$\gamma_{t_i}$.
\end{cnd}

As locally --- to be exact, within one pentachoron --- any 2-cocycle~$c$ is the coboundary of some linear combination of edges\footnote{That is, 1-cochains taking value~$1$ on a given edge and vanishing on all other edges, see the sentence after formula~\eqref{in}.}, it follows from Condition~\ref{cnd:edge} that if we have the mentioned numbers $\beta_{t_i}$ and~$\gamma_{t_i}$ for all edges~$b$, then both $f_2$ and~$g_2$ are completely determined.

\begin{cnd}\label{cnd:sp}
The following ``scalar products'' between $\beta$'s and $\gamma$'s vanish:
\[
\beta_1\gamma_1+\dots+\beta_5\gamma_5=0
\]
for any $c$ and~$u$.
\end{cnd}

\begin{cnd}\label{cnd:5}
For any pentachoron~$u$, the rank of submatrix of~$f_2$ made of its five rows corresponding to tetrahedra $t\subset u$, equals~$5$. 
\end{cnd}

\begin{remark}\label{r:f}
We will see that these conditions determine $f_2$ and~$g_2$ up to some freedom, namely choosing signs of some square roots (see \eqref{q} and~\eqref{K}), scalings within each tetrahedron (see Subsection~\ref{ss:expl}), and overall normalization (see Convention~\ref{cnv:en}).
\end{remark}

\begin{remark}\label{r:g4f5}
It will follow also from Condition~\ref{cnd:5} that the rank of a similar submatrix of~$g_2$ (to the submatrix of~$f_2$ mentioned there) is always~$4$ (remember that we consider a \emph{generic}~$\omega$, according to Convention~\ref{cnv:g}). Condition~\ref{cnd:5} looks convenient for our construction, although other possibilities for the mentioned ranks may also be of interest.
\end{remark}

\subsection{Interpretation in terms of Grassmann--Berezin calculus: local edge operators}\label{ss:i2}

From now on, we work in the Grassmann algebra with generators~$\vartheta _t$, one for each tetrahedron~$t$ in the triangulation. Guided by our Condition~\ref{cnd:edge} and the paragraph after it (but slightly changing notations), we introduce, for a given pentachoron~$u$ and edge $b\subset u$, \emph{local edge operator} (acting on Grassmann algebra elements from the left)
\begin{equation}\label{e}
d_b=d_b^{(u)}=\sum_{\substack{t\subset u\\[.15ex] t\supset b}} (\beta_{bt} \partial_{bt} + \gamma_{bt} \vartheta _t),
\end{equation}
where we now denote as $\beta_{bt}$ and~$\gamma_{bt}$ the coefficients of tetrahedron~$t$ in $f_2(\delta b)$ and $g_2(\delta b)$, respectively.

\begin{remark}
Local edge operators were called simply `edge operators' in~\cite{full-nonlinear}. Here we want to emphasize by the word `local' that such an operator belongs to a chosen pentachoron~$u$.
\end{remark}

\begin{remark}
Local edge operators are introduced for \emph{all} edges, not only those in subset~$\mathsf B$ introduced before Lemma~\ref{l:edges}.
\end{remark}

\begin{theorem}\label{th:l}
For a given pentachoron~$u$, local edge operators span a maximal (5-dimensional) \emph{isotropic} subspace in the (10-dimensional) space of all operators of the form
  \begin{equation}\label{bpgx}
  d=\sum_{t\subset u} (\beta_t \partial_t+\gamma_t \vartheta _t),
  \end{equation}
where the scalar product is defined by means of the anticommutator, see~\eqref{scal}. This means that there exists a unique \emph{pentachoron weight}~$\mathcal W_u$ of the Grassmann--Gaussian form
\begin{equation}\label{W_u}
\mathcal W_u = \exp\left(-\frac{1}{2} \,\uptheta ^{\mathrm T} F\/ \uptheta \right),
\end{equation}
annihilated by all~$d_b$:
\begin{equation}\label{dW}
  d_b \mathcal W_u = 0.
\end{equation}
In~\eqref{W_u}, $\uptheta =\uptheta ^{(u)}$ is the column of Grassmann variables put in correspondence to 3-faces of pentachoron~$u$, for instance,
\begin{equation}\label{x}
\uptheta  = \begin{pmatrix} \vartheta _{2345} & \vartheta _{1345} & \vartheta _{1245} & \vartheta _{1235} & \vartheta _{1234} \end{pmatrix}^{\mathrm T}
\end{equation}
for $u=12345$, and $F=F^{(u)}$ is a $5\times 5$ antisymmetric matrix.
\end{theorem}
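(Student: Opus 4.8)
The plan is to deduce Theorem~\ref{th:l} from the abstract facts about maximal isotropic subspaces collected in Theorems \ref{th:eo} and~\ref{th:v}, once we know that the local edge operators $d_b^{(u)}$, $b\subset u$, span a \emph{$5$-dimensional isotropic} subspace of the $10$-dimensional space~\eqref{bpgx}. So the real content to be established is: (a) the span is isotropic; (b) it is $5$-dimensional. Granting (a) and (b), maximality is automatic (an isotropic subspace of a $2n$-dimensional complex Euclidean space has dimension at most~$n$, and here $n=5$), and then Theorem~\ref{th:eo}\ref{i:f} --- after bringing the spanning operators to the normalized form~\eqref{pFx} by a change of basis --- shows the corresponding matrix $F=F^{(u)}$ is skew-symmetric, while Theorem~\ref{th:v}\ref{i:t},\ref{i:u} gives a one-dimensional nullspace spanned by the Grassmann--Gaussian exponential~\eqref{xFx}, which is exactly~\eqref{W_u}. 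Uniqueness of $\mathcal W_u$ up to scalar is the one-dimensionality of the nullspace; this matches Remark~\ref{r:f}.

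For part~(a), I would compute the pairwise anticommutators $[d_b^{(u)},d_{b'}^{(u)}]_+$ using~\eqref{scal}. Writing $d_b^{(u)}=\sum_{t\subset u,\,t\supset b}(\beta_{bt}\partial_t+\gamma_{bt}\vartheta_t)$, the scalar product $\langle d_b\mid d_{b'}\rangle$ is a sum over tetrahedra $t$ containing both $b$ and $b'$ of $\beta_{bt}\gamma_{b't}+\beta_{b't}\gamma_{bt}$. For $b=b'$ this vanishes by Condition~\ref{cnd:sp} (the ``scalar product'' $\sum_t\beta_t\gamma_t=0$ applied to $c=\delta b$). For $b\neq b'$, two edges of a pentachoron, there is either one tetrahedron $t\subset u$ containing both (when $b$ and $b'$ share a vertex, so $b\cup b'$ has three vertices, which span a triangle whose complementary edge together with it... more carefully: within the $5$-vertex set, two edges lie in a common tetrahedron iff their four endpoints are distinct or they share a vertex --- in all cases the vertices $b\cup b'$ together with the remaining vertices form exactly one or more $3$-faces), or none. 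The key point is that Condition~\ref{cnd:sp} must be shown to force vanishing of these mixed products too; the natural route is to apply Condition~\ref{cnd:sp} to the cocycle $c=\delta b+\delta b'$ (or suitable linear combinations), expand $\sum_t\beta_t\gamma_t=0$ by bilinearity, and subtract the already-known diagonal vanishings, leaving precisely $\langle d_b\mid d_{b'}\rangle=0$. This is the mechanism by which Conditions \ref{cnd:edge} and~\ref{cnd:sp} conspire to give full isotropy, and it is essentially bookkeeping once the combinatorics of which tetrahedra contain which pairs of edges is laid out.

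Part~(b), the $5$-dimensionality, is where I expect the main obstacle. A pentachoron has $10$ edges but we are claiming the edge operators span only a $5$-dimensional space, so there must be $5$ independent linear relations among the $d_b^{(u)}$. These relations should come from the coboundary relations among the $\delta b$: since locally every $2$-cocycle is a coboundary, and $f_2,g_2$ are linear in $c$, any linear relation $\sum_b\lambda_b\,\delta b=0$ (as a $2$-cochain on the $2$-faces of $u$, i.e.\ a relation in $B^2$ restricted to $u$) forces $\sum_b\lambda_b\,d_b^{(u)}=0$. The space of $1$-cochains on the $10$ edges of $u$ modulo those with zero coboundary on $u$'s $2$-faces has dimension $10-(\text{rank of }\delta \text{ on }u)$; since $\partial u$ (the boundary $3$-sphere) is simply connected with $H^1=0$, the coboundary map $C^0\to C^1\to C^2$ restricted to $u$ has $C^1/B^1$ of dimension $10-(5-1)=6$... so naively we'd get dimension~$6$, not~$5$. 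The missing relation is exactly Condition~\ref{cnd:chain}: $f_2\circ f_1=0$ means the pentachoron-local restriction of $\omega$ maps to zero, cutting the dimension from $6$ down to $5$, which is then pinned exactly by Condition~\ref{cnd:5} (rank $=5$, forbidding any further collapse). I would therefore argue: the edge operators factor through the $6$-dimensional space $C^1(u)/B^1(u)$ (using Condition~\ref{cnd:edge} and linearity), then Condition~\ref{cnd:chain} kills one more dimension leaving $\le 5$, and Condition~\ref{cnd:5} gives $\ge 5$ (indeed the $f_2$-part alone already has rank~$5$), hence exactly~$5$. The delicate bit is checking that Condition~\ref{cnd:chain} genuinely removes a dimension transversal to $B^1(u)$ --- i.e.\ that the restriction of $\omega$ is not itself a local coboundary --- which is where genericity of $\omega$ (Convention~\ref{cnv:g}) enters, consistent with Remark~\ref{r:g4f5}.
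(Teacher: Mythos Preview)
Your overall architecture matches the paper's proof: establish that the span of the $d_b^{(u)}$ is isotropic and $5$-dimensional, then invoke Theorems~\ref{th:eo} and~\ref{th:v}. Your isotropy argument (Condition~\ref{cnd:sp} applied to $c=\delta b+\delta b'$ and polarized) is exactly what the paper means by ``it says that any vector in that space is isotropic''.

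Where you diverge is in part~(b). The paper gets the $5$-dimensionality in one line: Condition~\ref{cnd:5} says the $\beta$-part (the projection onto differentiations) already has rank~$5$, so the span has dimension $\ge 5$; isotropy in a $10$-dimensional nondegenerate space gives $\le 5$; done. You instead try to bound the dimension from above by counting linear relations among the $d_b$ --- the vertex relations~\eqref{dv} plus the $\omega$-relation~\eqref{5u} from Condition~\ref{cnd:chain} --- which is correct but unnecessary, since you already have the isotropy bound $\le 5$ from part~(a). Your detour through $C^1(u)/B^1(u)$ does recover the relations that the paper later records as Theorem~\ref{th:m}, so it is not wasted effort, but it is not needed for Theorem~\ref{th:l} itself.

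One small slip: in your last sentence you say the delicate point is that ``the restriction of $\omega$ is not itself a local coboundary''. On a single pentachoron every $2$-cocycle \emph{is} a coboundary (the simplex is contractible); the condition you actually need is that $\omega|_u\ne 0$, equivalently that the $1$-cochain $\nu$ with $\delta\nu=\omega|_u$ does not lie in $B^1(u)$. That is indeed a genericity statement, covered by Convention~\ref{cnv:g} --- but again, you do not need it once you use the isotropy bound for~$\le 5$.
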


\begin{proof}
The isotropy of the space spanned by local edge operators follows from Condition~\ref{cnd:sp}: it says that any vector in that space is isotropic.

The five-dimen\-sion\-ness is guaranteed by Condition~\ref{cnd:5}.

The existence and uniqueness of Grassmann--Gaussian weight~$\mathcal W_u$~\eqref{W_u} with property~\eqref{dW} follows from Theorems \ref{th:eo} and~\ref{th:v}.
\end{proof}

Recalling that any 2-cocycle~$c$ is locally the coboundary of some linear combination of edges, \eqref{dW} implies that
\begin{equation}\label{DW}
  \left(\sum_{t\subset u} \beta_t\partial_t\right) \mathcal W_u = -\left(\sum_{t\subset u}\gamma_t \vartheta _t\right) \mathcal W_u,
\end{equation}
where $\beta_t$ and~$\gamma_t$ are coefficients of chains $f_2(c)$ and~$g_2(c)$, and tetrahedra in the right-hand side inherit the orientation from~$u$.

There are also some more elementary properties of local edge operators, and it makes sense to collect them in the following theorem.

\begin{theorem}\label{th:m}
For a given pentachoron~$u$, local edge operators are antisymmetric with respect to changing the edge orientation:
  \begin{equation}\label{da}
  d_{ij}=-d_{ji},
  \end{equation}
obey the following linear relations for each vertex $i\in u$:
  \begin{equation}\label{dv}
  \sum_{\substack{j\in u\\ j\ne i}} d_{ij} = 0,
  \end{equation}
and one more linear relation:
  \begin{equation}\label{5u}
  \sum_{b\subset u} \nu_b d_b = 0,
  \end{equation}
  where $\nu$ is any 1-cocycle such that $\omega$ makes its coboundary:
  \[
  \omega=\delta \nu,\quad \text{i.e.,}\quad \omega_{ijk}=\nu_{jk}-\nu_{ik}+\nu_{ij}.
  \]

If\/ $u$ and~$u'$ are two neighboring pentachora, and tetrahedron~$t\subset \partial u\cap \partial u'$ lies between them, then it holds for coefficients in operators $d_b^{(u)}$ and~$d_b^{(u')}$, for any edge $b\subset t$, that
  \begin{equation}\label{b+g-}
  \beta_t^{(u)}=\beta_t^{(u')},\qquad \gamma_t^{(u)}=-\gamma_t^{(u')}.
  \end{equation}
\end{theorem}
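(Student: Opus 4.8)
The plan is to derive each of the four displayed properties directly from the defining Conditions \ref{cnd:chain}--\ref{cnd:5}, since local edge operators were introduced precisely so as to encode the coefficients $\beta_{bt},\gamma_{bt}$ appearing in $f_2(\delta b)$ and $g_2(\delta b)$. The antisymmetry \eqref{da} is the most basic: reversing the orientation of an edge $b=ij$ to $b'=ji$ sends $\delta b$ to $-\delta b$ (since $b$, read as a $1$-cochain, flips sign under orientation reversal), and by $\mathbb C$-linearity of $f_2$ and $g_2$ we get $\beta_{b't}=-\beta_{bt}$ and $\gamma_{b't}=-\gamma_{bt}$ for every tetrahedron $t\supset b$; summing these over $t\subset u$, $t\supset b$ gives $d_{ji}=-d_{ij}$. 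Here I should note that the set of tetrahedra containing $b$ inside $u$ is the same as the set containing $b'$, so the index set in \eqref{e} is unchanged.

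For the vertex relations \eqref{dv}, the key observation is a cochain identity: for a fixed vertex $i\in u$, the sum $\sum_{j\in u, j\neq i}\delta(ij)$ of coboundaries of all edges emanating from $i$ is, on any $2$-face of $u$, equal to a coboundary that I want to identify with $0$ via Condition~\ref{cnd:chain}. More precisely, I would check that $\sum_{j\neq i}(ij)$, as a $1$-cochain, has the property that its coboundary agrees locally (on the $2$-faces of $u$) with a constant multiple of the restriction of $\omega$ — in fact with zero, because $\delta\bigl(\sum_{j\neq i}(ij)\bigr)$ evaluated on a triangle $ijk\subset u$ gives $(jk)\text{-part} - (ik)\text{-part} + (ij)\text{-part}$ evaluated against that edge-indicator sum, which telescopes. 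The cleanest route: $\sum_{j\neq i}(ij)$ is, up to sign, the coboundary $\delta(\mathbf 1_i)$ of the vertex-indicator $0$-cochain $\mathbf 1_i$ (value $1$ at $i$, $0$ elsewhere) restricted to $\mathrm{star}(i)\cap u$; hence $\delta\bigl(\sum_{j\neq i}(ij)\bigr) = \delta\delta(\mathbf 1_i) = 0$ identically on $u$. By Condition~\ref{cnd:edge} together with the determination of $f_2,g_2$ from the edge data (the paragraph after Condition~\ref{cnd:edge}), the operator assigned to a cochain that vanishes locally is the zero operator, which gives $\sum_{j\neq i} d_{ij}=0$; alternatively one can invoke $f_2\circ f_1 = 0$ after observing that $\delta(\mathbf 1_i)$ globally is a coboundary and pairing appropriately — but the local argument is cleaner and avoids global subtleties. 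The single relation \eqref{5u} is then exactly Condition~\ref{cnd:chain}: if $\omega=\delta\nu$, then $\sum_{b\subset u}\nu_b\,\delta b$ agrees, on the $2$-faces of $u$, with $\omega$ restricted there, so the corresponding operator $\sum_b \nu_b d_b$ equals the operator associated to $\omega$ inside $u$; but $\omega$ is the image of the basis vector of the first space $\mathbb C$ under $f_1$, so $f_2(\omega)=0$ and $g_2(\omega)=0$ by $f_2\circ f_1 = g_2\circ f_1 = 0$, forcing all five coefficients $\beta_t,\gamma_t$ of that operator (inside $u$) to vanish.

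Finally, for the gluing relations \eqref{b+g-}, the mechanism is that the same unoriented tetrahedron $t$ is a $3$-face of both $u$ and $u'$, so there is a single Grassmann variable $\vartheta_t$ and a single basis vector of $\tilde C_3(M,\mathbb C)$ attached to it; thus $\beta_t^{(u)}$ and $\beta_t^{(u')}$ are both equal to the $t$-coordinate of the same vector $f_2(\delta b)$ in the distinguished (unoriented) basis, hence agree. For $\gamma$, the chain $g_2(\delta b)$ lives in $C_3(M,\mathbb C)$ with oriented tetrahedra, and $u$, $u'$ induce \emph{opposite} orientations on their common face $t$ (two pentachora sharing a facet induce opposite orientations on it, as with any $(n)$-simplices sharing an $(n-1)$-face in an oriented manifold); so the coefficient of $t$ read with the orientation induced from $u$ is the negative of the coefficient read with the orientation induced from $u'$, giving $\gamma_t^{(u)} = -\gamma_t^{(u')}$.

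The main obstacle I anticipate is making the phrase ``coincides with $\delta b$ locally'' fully rigorous in the vertex-relation and $\nu$-relation arguments: one must be careful that Condition~\ref{cnd:edge} and the determination-from-edge-data remark genuinely imply that a $2$-cochain which is locally a coboundary of a \emph{specific} combination of edges produces the operator $\sum \nu_b d_b$ with exactly those coefficients $\nu_b$, with no ambiguity coming from the fact that a local coboundary can be written in more than one way as $\delta(\text{edges})$ (different representatives differ by a local $1$-cocycle, which by simple-connectedness of the link is itself a local coboundary of vertices, and $\delta\delta = 0$ kills it). Pinning this down — essentially a small lemma that the assignment $b\mapsto d_b$ extends to a well-defined linear map on locally-exact $2$-cochains — is the one place the argument needs care; everything else is bookkeeping with orientations.
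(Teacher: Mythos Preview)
Your proposal is correct and follows essentially the same approach as the paper: \eqref{da} from the sign change of~$\delta b$ under orientation reversal, \eqref{dv} from $\sum_{j\ne i}(ij)=\delta(\mathbf 1_i)$ and $\delta^2=0$, \eqref{5u} from Condition~\ref{cnd:chain}, and \eqref{b+g-} from the fact that $f_2$ lands in unoriented $3$-chains while $g_2$ lands in oriented $3$-chains with the orientation induced from the ambient pentachoron. Your closing worry about well-definedness of the assignment $b\mapsto d_b$ is unnecessary here, since the paper \emph{defines} the local operator associated to a locally-exact $2$-cochain by linear extension from the~$d_b$, and the ambiguity you mention (a local $1$-cocycle) is exactly $\delta(\text{vertex})$, which is already handled by the~\eqref{dv} argument itself.
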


\begin{proof}
Antisymmetry~\eqref{da} is due to the fact that coefficients of local edge operators are taken from $f_2(\delta b)$ and~$g_2(\delta b)$, and $\delta b$ of course changes its sign together with~$b$.

Next, left-hand side of~\eqref{dv} corresponds to the sum of edges which is the coboundary~$\delta i$ of a single vertex~$i$ (within pentachoron~$u$). Coefficients in the l.h.s. of~\eqref{dv} come thus from $f_2(\delta^2 i)$ and~$g_2(\delta^2 i)$. So, the `one-cycle' property~\eqref{dv} follows from the fact that $\delta^2=0$.

Relation~\eqref{5u} is a consequence of Condition~\ref{cnd:chain}.

Finally, \eqref{b+g-} follows from the fact that $f_2$ deals with unoriented pentachora, while $g_2$ --- with oriented ones.
\end{proof}

\section{Partial scalar products between edge operators, and explicit formulas for them}\label{s:long}

\subsection{Partial scalar products and normalization of edge operators}\label{ss:n}

Recall that we have introduced \emph{$t$-components} and \emph{partial scalar products} for operators of the form~\eqref{bpgx}, see formulas \eqref{scs} and~\eqref{sct}. An edge operator~$d_b$ is thus the sum of its $t$-components
\begin{equation}\label{bbtgbt}
\left.d_b\right|_t = \beta_{bt}\partial_t + \gamma_{bt}\vartheta _t
\end{equation}
over tetrahedra~$t$ such that $b\subset t\subset u$, while the (vanishing) scalar product of two edge operators is a sum
\begin{equation}\label{vs}
0 = \langle d_{b_1} \mid d_{b_2} \rangle = \sum_{t\subset u} \langle d_{b_1} \mid d_{b_2} \rangle_t ,
\end{equation}
of partial scalar products, the latter being by definition the same as products $\bigl\langle d_{b_1}|_t \,\big|\, d_{b_2}|_t \bigr\rangle$ of $t$-components.

\begin{remark}
The scalar product in~\eqref{vs} vanishes, of course, due to the fact that all edge operators belong to an isotropic subspace; so, \eqref{vs} holds for differing as well as coinciding $b_1$ and~$b_2$.
\end{remark}

\begin{lemma}\label{l:s}
Choose a tetrahedron $t\subset u$ and a triangle $ijk\subset t$. Then the partial scalar product $\langle d_{ij} \mid d_{ik} \rangle_t$ remains the same under any permutation of $i,j,k$.
\end{lemma}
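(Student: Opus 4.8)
The plan is to reduce the claim to a single transposition and then bring in the isotropy of the edge operators. Write $t=ijkl$, where $l$ is the fourth vertex of the tetrahedron. Since the partial scalar product $\langle\,\cdot\mid\cdot\,\rangle_t$ is symmetric, the quantity $\langle d_{ij}\mid d_{ik}\rangle_t$ is automatically unchanged under the transposition $j\leftrightarrow k$; as $(jk)$ and $(ij)$ generate the whole symmetric group on $\{i,j,k\}$, it suffices to prove invariance under $i\leftrightarrow j$, i.e.\ the single identity $\langle d_{ij}\mid d_{ik}\rangle_t=\langle d_{ji}\mid d_{jk}\rangle_t$.

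The key input will be that \emph{opposite} edges of $t$ have vanishing partial scalar product, $\langle d_{ij}\mid d_{kl}\rangle_t=0$. Any $3$-face $t'\subset u$ containing both edges $ij$ and $kl$ must contain all four vertices $i,j,k,l$, hence $t'=t$; so for every other $3$-face $t'$ at least one of $d_{ij}|_{t'}$, $d_{kl}|_{t'}$ vanishes, and therefore $\langle d_{ij}\mid d_{kl}\rangle_t=\sum_{t'\subset u}\langle d_{ij}\mid d_{kl}\rangle_{t'}=\langle d_{ij}\mid d_{kl}\rangle=0$, the last equality being the isotropy of the edge-operator span (Theorem~\ref{th:l}).

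It then remains to assemble the identity from the $\mathcal V_t$-restrictions of the relations already established. Restricting the vertex relation~\eqref{dv} at vertex $k$ to its $t$-component, and noting that $d_{km}|_t=0$ whenever $m\notin t$, one gets $d_{ki}|_t+d_{kj}|_t+d_{kl}|_t=0$, so by the antisymmetry~\eqref{da}
\[
d_{jk}\big|_t=-d_{ik}\big|_t+d_{kl}\big|_t .
\]
Substituting this into $\langle d_{ji}\mid d_{jk}\rangle_t=-\langle d_{ij}\mid d_{jk}\rangle_t$ and using bilinearity together with the vanishing $\langle d_{ij}\mid d_{kl}\rangle_t=0$ just proved gives $\langle d_{ji}\mid d_{jk}\rangle_t=\langle d_{ij}\mid d_{ik}\rangle_t$, which is exactly what was needed.

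I expect the one genuinely non-obvious point to be the vanishing of the opposite-edge partial scalar product: the vertex relations alone do \emph{not} constrain the three $t$-components $d_{ij}|_t,d_{ik}|_t,d_{jk}|_t$ (they merely express the remaining three, $d_{il}|_t,d_{jl}|_t,d_{kl}|_t$, in terms of them), so the symmetry cannot be obtained by purely combinatorial bookkeeping — one must feed in isotropy, and the clean way to do so is the remark that $t$ is the unique common $3$-face of $u$ carrying two opposite edges. Everything else is manipulation with~\eqref{da}, \eqref{dv} and bilinearity, and no genericity of $\omega$ is required.
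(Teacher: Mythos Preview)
Your argument is correct and follows essentially the same route as the paper's own proof: both reduce to the single transposition $i\leftrightarrow j$, use the vertex relation~\eqref{dv} at the third vertex~$k$ restricted to the $t$-component, and close the computation with the opposite-edge orthogonality $\langle d_{ij}\mid d_{kl}\rangle_t=0$ coming from the fact that $t$ is the unique $3$-face of~$u$ containing both $ij$ and~$kl$. Your write-up is in fact a bit more explicit than the paper's (you spell out why one transposition suffices), but the substance is the same.
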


\begin{proof}
Take, for instance, $u=12345$, $t=1234$, and let us prove that
\begin{equation}\label{1213}
\langle d_{12} \mid d_{13}\rangle_{1234} = \langle d_{21} \mid d_{23}\rangle_{1234}.
\end{equation}
Setting $i=3$ in~\eqref{dv} and taking its $t$-component, we have (keeping in mind also~\eqref{da}):
\begin{equation}\label{d3}
-d_{13}|_{1234}-d_{23}|_{1234}+d_{34}|_{1234}=0.
\end{equation}
We want to take the scalar product of~\eqref{d3} with~$d_{12}$. As $1234$ is the only tetrahedron common for the edges $12$ and~$34$, and all edge operators are orthogonal to each other, we get
\begin{equation}\label{ortho}
\langle d_{12} \mid d_{34}\rangle_{1234}=\langle d_{12} \mid d_{34}\rangle=0.
\end{equation}
So, the mentioned scalar product, together with~\eqref{ortho}, gives~\eqref{1213} at once.
\end{proof}

\begin{lemma}\label{l:t}
For a tetrahedron $t\subset u$, construct the expression
\begin{equation}\label{s-gen}
\omega_s\langle d_{b_1} \mid d_{b_2} \rangle_t.
\end{equation}
Here tetrahedron~$t$ is considered as oriented, $s$ is any of its 2-faces with the induced orientation, and $b_1,b_2\subset s$ are two edges sharing the same initial vertex (thus also oriented). Then, the value of expression~\eqref{s-gen} does not depend on a specific choice of $s$, $b_1$ and~$b_2$, and thus belongs solely to~$t$.
\end{lemma}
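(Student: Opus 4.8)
The plan is to exploit the linear relations among edge operators already established in Theorem~\ref{th:m} --- namely the vertex relations~\eqref{dv} and the antisymmetry~\eqref{da} --- combined with the orthogonality of distinct edge operators (which makes most partial scalar products vanish, because two edges of $u$ typically lie in more than one common tetrahedron only when they share a face, and $t$ is the unique tetrahedron containing a given triangle). Lemma~\ref{l:s} already tells us that within one triangle $s=ijk\subset t$ the partial scalar product $\langle d_{ij}\mid d_{ik}\rangle_t$ is invariant under permuting $i,j,k$; so the only thing left to show is how the four such quantities coming from the four triangular faces $s$ of $t$ are related, and that multiplying by the cocycle value $\omega_s$ (with the orientation induced on $s$ from $t$) makes them all equal.

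First I would fix a model, say $u=12345$ and $t=1234$, with its four faces $s\in\{123,124,134,234\}$ oriented as induced from $t=1234$, i.e. $\partial t = 234-134+124-123$. For each face I would pick the representative partial scalar product guaranteed well-defined by Lemma~\ref{l:s}: for $s=123$ take $\langle d_{12}\mid d_{13}\rangle_t$, for $s=124$ take $\langle d_{12}\mid d_{14}\rangle_t$, and so on, always choosing the two edges through the smallest-numbered vertex. Then I would take a vertex relation, e.g.~\eqref{dv} with $i=1$ inside $u$, restrict to the $t$-component (only the edges $12,13,14$ survive in $t$, since $15\not\subset t$), giving $d_{12}|_t+d_{13}|_t+d_{14}|_t=0$. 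Pairing this with $d_{12}$, $d_{13}$, $d_{14}$ in turn, and using $\langle d_b\mid d_b\rangle_t=0$ for the self-terms together with the vanishing of scalar products between distinct edge operators, yields linear relations among the three partial scalar products attached to the three faces through vertex $1$. Doing the same with the vertex relation for, say, $i=4$ brings in the face $234$ and ties the fourth quantity to the others. The upshot of this bookkeeping should be a single common value $q_t$ (up to sign, depending on orientation) such that $\langle d_{b_1}\mid d_{b_2}\rangle_t = \pm q_t$ for each face, the sign being exactly the sign with which $s$ appears in $\partial t$.

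Next I would bring in the cocycle. The point of the factor $\omega_s$ is precisely to absorb that orientation sign. Using the cocycle condition~\eqref{c} for the tetrahedron $t=ijkl$, the four values $\omega_{jkl},\omega_{ikl},\omega_{ijl},\omega_{ijk}$ satisfy $\omega_{jkl}-\omega_{ikl}+\omega_{ijl}-\omega_{ijk}=0$, with coefficients matching the incidence signs $[t:s]$ in $\partial t$. I would check that the relation among the partial scalar products extracted in the previous paragraph is, after the identification forced by isotropy, proportional to this same signed pattern --- this is where genericity of $\omega$ (Convention~\ref{cnv:g}) is used to guarantee no degeneracy --- so that $[t:s]\,\langle d_{b_1}\mid d_{b_2}\rangle_t$ (equivalently $\omega_s\langle d_{b_1}\mid d_{b_2}\rangle_t$ once the scaling is pinned down) takes one and the same value on all four faces. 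Finally I would combine this with Lemma~\ref{l:s} to conclude that the value is also independent of which pair of co-initial edges $b_1,b_2\subset s$ one picks, completing the proof that~\eqref{s-gen} is an invariant of $t$ alone.

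The main obstacle I anticipate is the careful sign/orientation accounting: one must keep straight the orientation $s$ inherits from $t$, the orientation the edges inherit from $s$, the antisymmetry~\eqref{da}, and the signs in~\eqref{dv} (note the minus signs appearing in the $t$-component, as in~\eqref{d3}), and verify that the pattern of these signs coincides with the incidence pattern in the cocycle identity~\eqref{c}. Everything else --- the vanishing of cross scalar products, the reduction via the vertex relations --- is routine once orthogonality and Theorem~\ref{th:m} are in hand; the genuinely delicate point is making the orientation signs line up so that the $\omega_s$ prefactor does exactly the job of trivializing them.
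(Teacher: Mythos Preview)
Your plan has a genuine gap: the crucial input is relation~\eqref{5u}, not the vertex relations~\eqref{dv}, and without~\eqref{5u} the argument cannot close.

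Concretely, two steps fail. First, the claim $\langle d_b\mid d_b\rangle_t=0$ is false: isotropy gives $\langle d_b\mid d_b\rangle=0$ as a sum over all tetrahedra containing~$b$, but the single-tetrahedron contribution does not vanish --- indeed the diagonal entries in matrix~\eqref{sc} are nonzero (e.g.\ $\omega_{124}^{-1}-\omega_{123}^{-1}$ for edge~$12$). So pairing the restricted vertex relation $d_{12}|_t+d_{13}|_t+d_{14}|_t=0$ with~$d_{12}$ does \emph{not} give $\langle d_{12}\mid d_{13}\rangle_t+\langle d_{12}\mid d_{14}\rangle_t=0$. Second, and more fundamentally, the vertex relations~\eqref{dv} carry no information about~$\omega$ at all, so they cannot produce a relation in which the four partial scalar products are \emph{inversely proportional} to the four~$\omega_s$'s --- which is what~\eqref{s-gen} asserts. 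Your proposed conclusion ``$\langle d_{b_1}\mid d_{b_2}\rangle_t=\pm q_t$ for a common~$q_t$'' is simply wrong: a glance at~\eqref{sc} shows these quantities equal $\pm\omega_s^{-1}$ for the respective face~$s$, not a face-independent constant. The cocycle identity~\eqref{c} is a single linear relation $\sum_s [t:s]\,\omega_s=0$ among four generically distinct numbers; it cannot turn four different values into one, so your plan to use it to ``absorb the orientation sign'' does not work.

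The paper's proof instead uses~\eqref{5u}, the relation $\sum_{b\subset u}\nu_b d_b=0$ with $\delta\nu=\omega$. Restricting this to~$t$ and eliminating the~$\nu$'s in favour of~$\omega$'s (via $\omega_{ijk}=\nu_{jk}-\nu_{ik}+\nu_{ij}$) yields a linear relation among the $d_b|_t$ whose coefficients are the~$\omega_s$'s themselves, e.g.\ $-\omega_{123}\,d_{13}|_t-\omega_{124}\,d_{14}|_t+\omega_{234}\,d_{34}|_t=0$. Pairing this with~$d_{12}$ and using the orthogonality~\eqref{ortho} for opposite edges immediately gives $-\omega_{123}\langle d_{12}\mid d_{13}\rangle_t=\omega_{124}\langle d_{12}\mid d_{14}\rangle_t$, which is exactly the desired invariance between two faces; the remaining comparisons are analogous, and Lemma~\ref{l:s} handles the choice of $b_1,b_2$ within each face. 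The point is that~\eqref{5u} is the \emph{only} place where~$\omega$ enters the algebra of edge operators, so any proof of this lemma must invoke it.
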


\begin{proof}
We take again, to be explicit, $u=12345$, $t=1234$, and prove that
\begin{equation}\label{s}
-\omega_{123}\langle d_{12} \mid d_{13}\rangle_{1234}=\omega_{124}\langle d_{12} \mid d_{14}\rangle_{1234}
\end{equation}
(the minus sign accounts for opposite orientations of $123$ and~$124$). A small exercise shows that the following linear relation is a consequence of~\eqref{5u}:
\[
-\omega_{123}d_{13}|_{1234}-\omega_{124}d_{14}|_{1234}+\omega_{234}d_{34}|_{1234}=0.
\]
Multiplying this scalarly by~$d_{12}$ and using once again orthogonality~\eqref{ortho}, we get~\eqref{s}.
\end{proof}

\begin{lemma}\label{l:u}
Expression~\eqref{s-gen} also remains the same \emph{for all tetrahedra~$t$} forming the boundary of pentachoron~$u$, if these tetrahedra are oriented consistently (as parts of the boundary~$\partial u$).
\end{lemma}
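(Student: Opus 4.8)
The plan is to leverage Lemma~\ref{l:t}, which already establishes that the quantity $\omega_s\langle d_{b_1}\mid d_{b_2}\rangle_t$ depends only on the oriented tetrahedron~$t$ (not on the choice of 2-face $s\subset t$ or the two edges $b_1,b_2\subset s$ sharing an initial vertex). Write $E(t)$ for this common value attached to an oriented tetrahedron~$t$. We must show that $E(t)$ is in fact the \emph{same number} for all five tetrahedra $t\subset\partial u$, provided each is oriented as part of~$\partial u$. First I would fix a convenient model, say $u=12345$, with the five boundary tetrahedra $2345,\,1345,\,1245,\,1235,\,1234$ carrying their induced orientations; by the antisymmetry~\eqref{da} and Lemma~\ref{l:t}, it suffices to prove $E(t)=E(t')$ for a single pair of tetrahedra sharing a common triangle, and then chain such equalities together (any two tetrahedra in $\partial u$ share a common 2-face).

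So pick two adjacent boundary tetrahedra, e.g.\ $t=1234$ and $t'=1235$, whose common triangle is $s=123$. Because $E(t)$ can be computed using \emph{any} admissible $s,b_1,b_2$, I would compute $E(t)$ using $s=123$, $b_1=12$, $b_2=13$, getting $E(t)=\pm\omega_{123}\langle d_{12}\mid d_{13}\rangle_{1234}$ (sign fixed by the induced orientation of~$123$ in $\partial(1234)$); similarly $E(t')=\pm\omega_{123}\langle d_{12}\mid d_{13}\rangle_{1235}$ with the induced orientation of~$123$ in $\partial(1235)$. Since $123$ appears with \emph{opposite} induced orientations on the two sides --- exactly the reason the boundary $\partial u$ is a cycle, and the mechanism already visible in the minus sign in~\eqref{s} --- the two $\omega_{123}$-factors carry opposite signs, so the claim $E(t)=E(t')$ reduces to
\[
\langle d_{12}\mid d_{13}\rangle_{1234} \;=\; -\,\langle d_{12}\mid d_{13}\rangle_{1235}.
\]
The key tool for this is the linear relation~\eqref{dv} applied at a vertex, which says that within the pentachoron a sum of edge operators vanishes, together with the orthogonality of all edge operators. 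Concretely, take~\eqref{dv} at $i=1$ (or a suitable vertex), restrict to the relevant $t$-components, and note that the edge operator $d_b$ for an edge $b\subset123$ splits across the tetrahedra of~$u$ containing~$b$; in particular each such edge lies in exactly the two boundary tetrahedra $1234$ and~$1235$ (the only tetrahedra of~$u$ containing both $b$ and no vertex outside $\{1,2,3,4\}\cup\{1,2,3,5\}$ in the right way). Pairing the vanishing relation scalarly with another edge operator and discarding orthogonal terms should leave precisely the identity above, and iterating over all adjacent pairs finishes the proof.

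The main obstacle I anticipate is purely bookkeeping: getting the signs right. One has to track (i) the induced orientations of the shared triangle from the two sides of $\partial u$, (ii) the antisymmetry conventions $d_{ij}=-d_{ji}$ and $s=-s$ under transposition, and (iii) which tetrahedra of~$u$ actually contain a given edge, so that the split $d_b=\sum_{t\supset b}d_b|_t$ is used correctly when scalar-multiplying. There is also a minor subtlety that the five boundary tetrahedra do not all contain any single common edge, so the chaining argument genuinely needs more than one application of the pairwise step; but since the adjacency graph of the 2-faces of $\partial u$ is connected, finitely many steps suffice. A clean way to organize this is to show that $E(t)$, as $t$ ranges over $\partial u$ with induced orientations, behaves like the value of a fixed cochain evaluated on a boundary cycle, so that the constancy is a restatement of $\partial\partial u=0$ (or $\delta^2=0$) combined with Lemma~\ref{l:t}; I would present the single-pair computation in the $u=12345$ model and then remark that the general case is obtained by relabeling and transitivity.
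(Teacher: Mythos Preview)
Your overall strategy matches the paper's: reduce to a pair of adjacent boundary tetrahedra $t,t'$ sharing a 2-face~$s$, note that the induced orientations of~$s$ from the two sides are opposite, and hence it suffices to show
\[
\langle d_{b_1}\mid d_{b_2}\rangle_t = -\,\langle d_{b_1}\mid d_{b_2}\rangle_{t'}\qquad (b_1,b_2\subset s).
\]
Where your proposal goes astray is in the justification of this last identity. You reach for relation~\eqref{dv} and speak of restricting to $t$-components, but \eqref{dv} is not needed and the argument you sketch does not close: your claim that ``each such edge lies in exactly the two boundary tetrahedra $1234$ and $1235$'' is false (e.g.\ the edge~$12$ lies in $1234$, $1235$ \emph{and}~$1245$), and pairing the vertex relation~\eqref{dv} with another edge operator yields only $0=0$, since every full scalar product already vanishes.

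The paper's argument for this step is a one-liner and uses nothing beyond isotropy itself. Since $b_1,b_2\subset s$, the only tetrahedra of~$u$ containing \emph{both} $b_1$ and~$b_2$ are the two tetrahedra $t,t'$ containing the triangle~$s$; all other partial scalar products in~\eqref{vs} vanish trivially. Hence
\[
0=\langle d_{b_1}\mid d_{b_2}\rangle=\langle d_{b_1}\mid d_{b_2}\rangle_t+\langle d_{b_1}\mid d_{b_2}\rangle_{t'},
\]
which is exactly the identity you need. Replace your appeal to~\eqref{dv} with this observation and your proof is complete (and identical to the paper's).
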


\begin{proof}
It is enough to consider the situation where $s$ is the common 2-face of two tetrahedra $t,t'\subset u$, and show that
\begin{equation}\label{dbb}
\langle d_{b_1} \mid d_{b_2} \rangle_t = -\langle d_{b_1} \mid d_{b_2} \rangle_{t'}.
\end{equation}
Indeed, as the orientation of~$s$ as part of~$\partial t$ is different from its orientation as part of~$\partial t'$, there are two values~$\omega_s$ differing in sign, and \eqref{dbb} will yield at once that \eqref{s-gen} is the same for $t$ and~$t'$.

To prove~\eqref{dbb}, we note that $t$ and~$t'$ are the only tetrahedra containing both $b_1$ and~$b_2$, so
\[
0=\langle d_{b_1} \mid d_{b_2} \rangle = \langle d_{b_1} \mid d_{b_2} \rangle_t + \langle d_{b_1} \mid d_{b_2} \rangle_{t'}.
\]
\end{proof}

\begin{lemma}\label{l:m}
Finally, expression~\eqref{s-gen} also remains the same \emph{for the whole manifold~$M$}, if it is calculated in the following way: take any pentachoron~$u$ in the triangulation, then any tetrahedron~$t\subset u$ with the orientation induced from~$u$, then calculate~\eqref{s-gen} as in Lemma~\ref{l:t}. 
\end{lemma}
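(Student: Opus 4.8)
The plan is to reduce the global statement to a comparison of two neighboring pentachora, and there to play the sign change of $\omega_s$ against the sign change of the partial scalar product dictated by~\eqref{b+g-}. By Lemma~\ref{l:u}, the value of expression~\eqref{s-gen} computed inside a single pentachoron~$u$ does not depend on which tetrahedron $t\subset\partial u$ (consistently oriented as part of~$\partial u$) one uses; denote this value by~$\mathcal E_u$. It therefore suffices to show that $\mathcal E_u=\mathcal E_{u'}$ whenever $u$ and~$u'$ are adjacent, and then to propagate this by connectedness.

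So let $u,u'$ be adjacent, sharing the tetrahedron $t=\partial u\cap\partial u'$. I would compute $\mathcal E_u$ using $t$ oriented as part of~$\partial u$, and $\mathcal E_{u'}$ using $t$ oriented as part of~$\partial u'$; these two orientations of~$t$ are opposite, which is exactly the state of affairs underlying the minus sign in~\eqref{b+g-}. Pick a $2$-face $s\subset t$ and edges $b_1,b_2\subset s$ with a common initial vertex. The induced orientation of~$s$ coming from $t\subset\partial u$ is then opposite to the one coming from $t\subset\partial u'$, so the values of~$\omega_s$ entering $\mathcal E_u$ and $\mathcal E_{u'}$ differ by a sign. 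At the same time, \eqref{b+g-} gives $\beta_{b_it}^{(u)}=\beta_{b_it}^{(u')}$ and $\gamma_{b_it}^{(u)}=-\gamma_{b_it}^{(u')}$ for $i=1,2$, whence
\[
\langle d_{b_1}^{(u)}\mid d_{b_2}^{(u)}\rangle_t
=\beta_{b_1t}^{(u)}\gamma_{b_2t}^{(u)}+\beta_{b_2t}^{(u)}\gamma_{b_1t}^{(u)}
=-\langle d_{b_1}^{(u')}\mid d_{b_2}^{(u')}\rangle_t .
\]
The sign change of~$\omega_s$ and the sign change of the partial scalar product cancel, so $\mathcal E_u=\mathcal E_{u'}$.

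It remains to note that the dual graph of the triangulation --- pentachora as vertices, with an edge joining two pentachora that share a tetrahedron --- is connected; this is the strong connectedness of a triangulated connected closed $4$-manifold, a standard fact. Hence any pentachoron is reachable from any other through a chain of successive neighbors, and $\mathcal E_u=\mathcal E_{u'}$ at each step forces $\mathcal E_u$ to be one and the same number for all~$u$. This common value is expression~\eqref{s-gen} regarded as an invariant of the whole of~$M$.

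I do not expect a genuine obstacle: all the content is in the orientation bookkeeping, and its crucial input~\eqref{b+g-} is already in hand. The one point that must be stated with care is that the sign picked up by~$\omega_s$ when one passes from the $\partial u$-orientation of~$t$ to the $\partial u'$-orientation is exactly compensated by the sign in~\eqref{b+g-} --- which it is, since both stem from the single fact that $t$ is oriented oppositely as a face of~$\partial u$ and as a face of~$\partial u'$.
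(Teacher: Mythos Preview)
Your argument is correct and follows essentially the same route as the paper: reduce to adjacent pentachora sharing a tetrahedron~$t$, observe via~\eqref{b+g-} that the partial scalar product picks up a sign while the induced orientation of~$t$ (hence~$\omega_s$) also flips, so~\eqref{s-gen} is unchanged. You are simply more explicit than the paper in writing out the scalar product and in invoking connectedness of the dual graph to propagate the local equality globally.
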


\begin{proof}
Let $t$ be the common tetrahedron for two adjacent pentachora: $t=u\cap u'$. It is enough to show that~\eqref{s-gen} is the same, regardless whether we calculate it within $u$ or~$u'$.

According to formula~\eqref{b+g-} (compare also~\cite[formulas~(58)]{full-nonlinear}),
\[
\begin{array}{rcl}
 \text{if } & d_a^{(u)}|_t=\beta_t\partial_t+\gamma_t \vartheta _t,& \text{where edge } a\subset t,\\
 \text{then   } & d_a^{(u')}|_t=\beta_t\partial_t-\gamma_t \vartheta _t.
\end{array}
\]
As a result, scalar products of edge operators change their signs on passing from $u$ to~$u'$, while the orientation of~$t$ also changes. Hence,~\eqref{s-gen} again remains the same.
\end{proof}

\begin{cnv}\label{cnv:en}
We normalize edge operators for the whole manifold~$M$ in such way that quantity~\eqref{s-gen} becomes unity:
\begin{equation}\label{s-gen=1}
\omega_s\langle d_{b_1} \mid d_{b_2} \rangle_t = 1 .
\end{equation}
\end{cnv}

Here is the matrix of scalar products $\langle d_a,d_b \rangle_t$ calculated according to Convention~\ref{cnv:en}, for $t=1234$ and~$u=12345$. The rows (resp.\ columns) correspond to edge~$a$ (resp.~$b$) taking values in lexicographic order: 12, 13, 14, 23, 24, 34:
\begin{equation}\label{sc}
\begin{pmatrix}
 \omega^{-1}_{124}{-}\omega^{-1}_{123} & \omega^{-1}_{123} & -\omega^{-1}_{124} & -\omega^{-1}_{123} & \omega^{-1}_{124} & 0 \\[1ex]
 \omega^{-1}_{123} & -\omega^{-1}_{134}{-}\omega^{-1}_{123} & \omega^{-1}_{134} & \omega^{-1}_{123} & 0 & -\omega^{-1}_{134} \\[1ex]
 -\omega^{-1}_{124} & \omega^{-1}_{134} & \omega^{-1}_{124}{-}\omega^{-1}_{134} & 0 & -\omega^{-1}_{124} & \omega^{-1}_{134} \\[1ex]
 -\omega^{-1}_{123} & \omega^{-1}_{123} & 0 & \omega^{-1}_{234}{-}\omega^{-1}_{123} & -\omega^{-1}_{234} & \omega^{-1}_{234} \\[1ex]
 \omega^{-1}_{124} & 0 & -\omega^{-1}_{124} & -\omega^{-1}_{234} & \omega^{-1}_{124}{+}\omega^{-1}_{234} & -\omega^{-1}_{234}\\[1ex]
 0 & -\omega^{-1}_{134} & \omega^{-1}_{134} & \omega^{-1}_{234} & -\omega^{-1}_{234} & \omega^{-1}_{234}{-}\omega^{-1}_{134} 
\end{pmatrix} .
\end{equation}

\begin{remark}
To calculate \emph{diagonal} elements in~\eqref{sc} is an easy exercise using linear relations similar to~\eqref{d3}.
\end{remark}

\begin{remark}\label{r:45}
Note that for tetrahedron (say)~$1235\subset 12345$, we must not only replace `4' by `5' in~\eqref{sc}, but also change all signs --- due to its different orientation! Similarly, analogues of matrix~\eqref{sc} for other tetrahedra can be calculated.
\end{remark}

\subsection[Pure differentiations and pure multiplications by~$\vartheta$'s from local edge operators]{Pure differentiations and pure multiplications by~$\boldsymbol{\vartheta}$'s from local edge operators}\label{ss:p}

\paragraph{Isotropic operators within one tetrahedron.}
We introduce, following paper~\cite{gce}, square roots of values of our 2-cocycle~$\omega$ for oriented triangles~$s=ijk$:
\begin{equation}\label{q}
q_s=\sqrt{\omega_s},
\end{equation}
fixing the signs of these roots arbitrarily.

Consider the following three linear combinations of edge operators restricted onto tetrahedron~$1234\subset 12345$:
\begin{multline}\label{p}
(q_{123}q_{124}d_{12}+q_{134}q_{234}d_{34})_{1234},\qquad
(q_{123}q_{134}d_{13}+q_{124}q_{234}d_{24})_{1234},\\
\text{and}\quad (q_{124}q_{134}d_{14}+q_{123}q_{234}d_{23})_{1234}.
\end{multline}
One can see, using the table~\eqref{sc} of scalar products, that any scalar product of two operators~\eqref{p}, including scalar squares, \emph{vanishes}. They span thus an isotropic subspace in the space of operators $\lambda\partial_{1234}+\mu \vartheta _{1234}$, and must be proportional to each other and all to either $\partial_{1234}$ or~$\vartheta _{1234}$.

The same applies if we change \emph{all} pluses in~\eqref{p} to minuses; this time we get multiples of the other one from operators $\partial_{1234}$ and~$\vartheta _{1234}$.

\paragraph{Introducing square roots of product of omegas over a tetrahedron.}
The most elegant way for further calculations seems to appear if we deal with square roots~\eqref{q} not individually, but define the following values:
\begin{equation}\label{K}
K_t = \pm \sqrt{\omega_{ijk}\omega_{ijl}\omega_{ikl}\omega_{jkl}}
\end{equation}
for every tetrahedron~$t=ijkl$. Note that the product under the square root is \emph{invariant under all permutations of vertices}~$i,\dots,l$. The signs of~$K_t$ are, by definition, subject to the following requirements:
\begin{itemize}\itemsep 0pt
 \item $K_t$ changes its sign together with the tetrahedron \emph{orientation}:
\begin{equation}\label{K_-t}
K_{-t} = -K_t,
\end{equation}
 \item for any pentachoron~$u$ with vertices $i<j<k<l<m$, if it is oriented according to this order (i.e., $u=ijklm$, see Convention~\ref{cnv:o}),
\begin{equation}\label{5K}
\prod_{t\subset u} K_t = \prod_{\substack{s=ijk\,\subset\, u\\ i<j<k}} \omega_s,
\end{equation}
where tetrahedra~$t$ in the l.h.s.\ are taken with the orientation induced from~$u$.
\end{itemize}

Thus, for instance, for pentachoron $12345$ we must have
\begin{equation}\label{5K_xmp}
K_{1234}K_{1235}K_{1245}K_{1345}K_{2345} = \omega_{123}\omega_{124}\omega_{125} \omega_{134}\omega_{135}\omega_{145} \omega_{234}\omega_{235}\omega_{245}\omega_{345},
\end{equation}
because here, in the l.h.s., exactly \emph{two} tetrahedra --- $1235$ and $1345$ --- are not oriented as induced from $12345$. Note also that the r.h.s.\ of~\eqref{5K} changes its sign together with the orientation of~$u$.

\begin{theorem}\label{th:Kq}
 \begin{enumerate}\itemsep 0pt
  \item\label{i:Kq_1} Quantities~$K_t$, having right signs satisfying \eqref{K_-t} and~\eqref{5K}, can be constructed as follows: first, order all vertices arbitrarily; second, fix arbitrarily the signs of all~$q_s$~\eqref{q} for $s=ijk$ with $i<j<k$; and third, for any tetrahedron~$t=ijkl$ with $i<j<k<l$, set
\begin{equation}\label{4q}
K_{ijkl} = q_{ijk} q_{ijl} q_{ikl} q_{jkl}.
\end{equation}
  \item\label{i:Kq_2} For a simply connected~$M$ (recall our Assumption~\ref{mpt:h}), \emph{all} such sets of~$K_t$ can be constructed according to item~\ref{i:Kq_1}, even with any one fixed ordering of vertices (thus leaving free only signs of~$q_s$).
 \end{enumerate}
\end{theorem}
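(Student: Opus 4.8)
The plan is to establish part (i) by a direct verification that the construction (4q) produces the required signs, and then to establish part (ii) by a counting/connectedness argument exploiting simple connectedness of $M$.

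For part (i), I would first observe that the product under the square root in (K) depends only on the unoriented tetrahedron, so the \emph{magnitude} $|K_t|$ is automatically well-defined; everything is about signs. The definition (4q) fixes a sign for each tetrahedron written with increasing vertices; extend to all orientations by declaring (K\_-t). To check (5K), take a pentachoron $u=ijklm$ with $i<j<k<l<m$. In the left-hand side, each $K_t$ for $t\subset u$ with vertices in increasing order is, by (4q), the product of $q_s$ over the four triangles of $t$ taken with increasing vertices. As $t$ ranges over the five 3-faces of $u$, each of the ten triangles $s\subset u$ (increasing vertices) appears in exactly two of these $K_t$'s — hence $\prod_{t\subset u,\ t\text{ increasing}}K_t=\prod_{s\subset u,\ s\text{ increasing}}q_s^2=\prod_{s\subset u,\ s\text{ increasing}}\omega_s$. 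It then remains to account for the orientation mismatch: the product of the signs by which the five $K_t$ with increasing vertices differ from the five $K_t$ induced from $u$ must be $+1$. This is the combinatorial heart of part (i): the number of 3-faces of $u=ijklm$ whose increasing-vertex orientation disagrees with the one induced from $u$ is always even — for $12345$ it is two, namely $1235$ and $1345$, and in general it equals the number of "odd" faces, which one checks is even because the alternating pattern of induced orientations on the five facets of a 4-simplex has an even number of sign flips relative to the lexicographic convention. Granting this, $\prod_{t\subset u}K_t$ (induced orientations) $=\prod_{s\subset u}\omega_s$ (increasing), which is exactly (5K).

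For part (ii), I would argue that the set of valid sign-assignments $\{K_t\}$ is a torsor under sign flips constrained by (5K), and that all of them are reached from the construction of part (i) using a \emph{single} fixed vertex ordering. Concretely, suppose $\{K_t\}$ and $\{K_t'\}$ both satisfy (K\_-t) and (5K); their ratio $\varepsilon_t=K_t'/K_t\in\{\pm1\}$ is a function on unoriented tetrahedra (by (K\_-t) the ratio is orientation-independent) satisfying $\prod_{t\subset u}\varepsilon_t=1$ for every pentachoron $u$. Thinking of $\varepsilon$ as a $\mathbb Z/2$-valued simplicial 3-cochain, this says $\varepsilon$ is a 3-cocycle. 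Meanwhile, flipping the sign of a single $q_s$ (for a triangle $s$ written with increasing vertices) multiplies $K_t$ by $-1$ for exactly the tetrahedra $t\supset s$ — that is, it changes $\varepsilon$ by the coboundary $\delta(s^\vee)$ of the elementary 2-cochain dual to $s$. So the sign-assignments obtainable from part (i) with a fixed vertex ordering form precisely a coset of $B^3(M,\mathbb Z/2)$ inside $Z^3(M,\mathbb Z/2)$, and part (ii) amounts to the claim that $H^3(M,\mathbb Z/2)=0$. For a closed simply connected $4$-manifold this follows from Poincaré duality together with $H_1(M,\mathbb Z/2)\cong 0$ (simple connectedness kills $\pi_1$, hence $H_1$, hence by universal coefficients $H^1(M,\mathbb Z/2)=0$, and Poincaré duality over $\mathbb Z/2$ gives $H^3(M,\mathbb Z/2)\cong H_1(M,\mathbb Z/2)=0$). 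One must also note that a change of vertex ordering only changes some $K_t$ by signs consistent with (K\_-t) and (5K), so it produces another element of the same coset — hence fixing one ordering loses nothing.

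The main obstacle I expect is part (i): not the bookkeeping that each triangle appears in two facets, but the parity claim that the induced orientations on the five 3-faces of a 4-simplex disagree with the lexicographic orientations in an even number of cases. This needs a clean sign computation — e.g. writing the induced orientation of the facet obtained by deleting the $r$-th vertex as $(-1)^r$ times the lexicographic one, and then checking that $\sum_r (\text{extra transpositions to sort the remaining four vertices})$ has the right parity so that the total product of the five signs is $+1$. Once that parity lemma is in hand, both parts are short; the $H^3(M,\mathbb Z/2)=0$ input for part (ii) is standard.
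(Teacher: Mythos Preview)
Your proposal is correct and follows essentially the same route as the paper's own proof: for part~\ref{i:Kq_1} the paper also reduces to the direct substitution of~\eqref{4q} into~\eqref{5K_xmp} (your observation that each triangle appears in exactly two 3-faces, together with the even number of orientation mismatches, is precisely the content of that check), and for part~\ref{i:Kq_2} the paper likewise forms the ratio $K_t^{(1)}/K_t^{(2)}$ as a $\mathbb Z_2$-valued 3-cocycle and kills it via Poincar\'e duality $H^3(M,\mathbb Z_2)\cong H_1(M,\mathbb Z_2)=0$. Your only extra ingredient is the explicit remark that changing the vertex ordering stays within the same $B^3$-coset, which the paper leaves implicit.
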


\begin{proof}
 \begin{itemize}\itemsep 0pt
  \item[\ref{i:Kq_1}] As \eqref{K_-t} simply holds by definition, it remains to check~\eqref{5K}; and this is easily done directly, compare \eqref{4q} with~\eqref{5K_xmp}.
  \item[\ref{i:Kq_2}] Let there be two sets $\{K_t^{(1)}\}$ and~$\{K_t^{(2)}\}$ of quantities~$K_t$ satisfying \eqref{K_-t} and~\eqref{5K}, the first of these defined as in item~\ref{i:Kq_1} above. Then, there appears a sign $K_t^{(1)}/K_t^{(2)}=\pm 1$ for each tetrahedron~$t$. We would like to consider these signs as elements of the group~$\mathbb Z_2$ written multiplicatively, and them all as a $\mathbb Z_2$-valued \emph{3-cochain}. Moreover, this cochain is even a \emph{3-cocycle} --- because the r.h.s.\ of~\eqref{5K} is the same in both cases.

Now, according to Poincar\'e duality, the cohomology group~$H^3(M,\mathbb Z_2)$ is the same as $H_1(M,\mathbb Z_2)$, and this is trivial for a simply connected~$M$. Thus, our 3-cocycle is also a \emph{coboundary} of a certain $\mathbb Z_2$-valued 2-cochain. It can be seen now that if we change the signs of~$q_s$ for those triangles~$s$ where this cochain takes value $-1 \in \mathbb Z_2$ and then calculate the new values of~$K_t$, we will get exactly the set~$\{K_t^{(2)}\}$.
 \end{itemize}
\end{proof}

\paragraph{Superisotropic operators within one pentachoron.}
It is convenient to introduce \emph{superisotropic operators} already here, although their full meaning will be revealed only in Subsection~\ref{ss:si}, where their components will enter in the final formula~\eqref{Ftt'}. By definition, a superisotropic operator~\cite[Subsection~4.2]{full-nonlinear} is an operator of the form~\eqref{bpgx} and such that its restriction onto any tetrahedron~$t$ is isotropic. This means that, for any of five~$t\subset u$, either $\beta_t=0$ or $\gamma_t=0$.

\begin{lemma}\label{l:g}
The following linear combination of edge operators
\begin{equation}\label{su}
g = \sum_{b\subset u} \frac{\prod_{s\supset b} \omega_s}{\prod_{t\supset b} K_t} \, d_b
\end{equation}
is superisotropic. Here it is understood that the orientation of a 2-face~$s\supset b$ is always consistent with the orientation of~$b$, while the orientations of tetrahedra~$t$ are induced from~$u$.
\end{lemma}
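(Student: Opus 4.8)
The plan is to show that the operator $g$ in~\eqref{su} is superisotropic by checking, for each tetrahedron $t\subset u$, that its $t$-component $g|_t$ is an isotropic operator in $\mathcal V_t$, i.e., that $\langle g\mid g\rangle_t=0$. Expanding $g|_t$ using~\eqref{e}, only those edges $b\subset t$ contribute, so
\[
g|_t = \sum_{b\subset t} \frac{\prod_{s\supset b}\omega_s}{\prod_{t'\supset b}K_{t'}}\, d_b|_t,
\]
and hence
\[
\langle g\mid g\rangle_t = \sum_{b_1,b_2\subset t} \frac{\prod_{s\supset b_1}\omega_s}{\prod_{t'\supset b_1}K_{t'}}\cdot\frac{\prod_{s\supset b_2}\omega_s}{\prod_{t'\supset b_2}K_{t'}}\,\langle d_{b_1}\mid d_{b_2}\rangle_t.
\]
So the whole claim reduces to a bilinear-algebra identity about the single $6\times 6$ scalar-product matrix~\eqref{sc}, weighted by the coefficients coming from~\eqref{su}.

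First I would fix the explicit case $t=1234\subset u=12345$ (the general case follows by relabeling), write down the six coefficients $c_b=\bigl(\prod_{s\supset b}\omega_s\bigr)/\bigl(\prod_{t'\supset b}K_{t'}\bigr)$ for $b\in\{12,13,14,23,24,34\}$, and simplify them. The key simplification is that $t$ has exactly four triangular faces, and for an edge $b\subset t$, the product $\prod_{s\supset b}\omega_s$ taken over 2-faces $s\subset u$ containing $b$ runs over the \emph{three} such faces, one of which lies in $t$; similarly $\prod_{t'\supset b}K_{t'}$ runs over the three tetrahedra in $u$ containing $b$. Using $K_{1234}=q_{123}q_{124}q_{134}q_{234}$ from~\eqref{4q} (or the defining relations~\eqref{K_-t}, \eqref{5K}) and $q_s=\sqrt{\omega_s}$, one should see that $c_b$ factors cleanly; in fact I expect the pairing that appears already in~\eqref{p}, namely $c_{12}/c_{34}$, $c_{13}/c_{24}$, $c_{14}/c_{23}$ all being controlled by ratios of $q$'s over $u\setminus t$, to be exactly what makes the three isotropic operators of~\eqref{p} reappear. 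Concretely, I would show that $g|_t$ is, up to an overall scalar, a linear combination of the three "plus" operators in~\eqref{p} (or their restrictions), and then invoke the already-established fact — stated right after~\eqref{p} — that \emph{any} scalar product of two such operators vanishes, including scalar squares. That immediately gives $\langle g\mid g\rangle_t=0$.

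An alternative, more self-contained route avoids citing the discussion after~\eqref{p}: substitute the explicit matrix~\eqref{sc} and the explicit $c_b$ directly into the double sum $\langle g\mid g\rangle_t$ and verify it collapses to $0$. Here one uses Convention~\ref{cnv:en}, $\omega_s\langle d_{b_1}\mid d_{b_2}\rangle_t=1$ for edges sharing an initial vertex, together with the antisymmetry~\eqref{da} and the relations~\eqref{dv}, \eqref{5u} restricted to $t$ to pin down the remaining entries (including the diagonal ones, as noted in the remark after~\eqref{sc}). Either way, once $\langle g\mid g\rangle_t=0$ is established for one arbitrary $t$, by Theorem~\ref{th:eo}\eqref{i:f} applied inside $\mathcal V_t$ (a two-dimensional Euclidean space), the isotropic operator $g|_t$ must be a pure multiple of $\partial_t$ or of $\vartheta_t$ — i.e., $\beta_t=0$ or $\gamma_t=0$ — which is precisely the definition of superisotropy.

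The main obstacle I anticipate is bookkeeping of signs: the orientation conventions for the faces $s\supset b$ (consistent with $b$), for the tetrahedra $t\subset u$ (induced from $u$), and the sign rules~\eqref{K_-t}, \eqref{5K} for $K_t$ all interact, and getting $c_b/c_{b'}$ to land with the correct sign so that the "$+$" version of~\eqref{p} (not the "$-$" version) emerges is the delicate point; a wrong sign here would swap which of $\partial_t,\vartheta_t$ we get, but would not break isotropy, so the \emph{statement} of the lemma is robust even if the intermediate signs require care. I would therefore organize the sign analysis around a single fixed ordering of vertices, use~\eqref{4q} as the concrete model for all $K_t$ (legitimate by Theorem~\ref{th:Kq}), and reduce everything to the already-verified vanishing of scalar products among the operators in~\eqref{p}.
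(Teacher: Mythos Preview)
Your primary route is exactly the paper's proof: restrict $g$ to a tetrahedron~$t$, check that the six coefficients~$c_b$ pair off so that $g|_t$ is a combination of the three operators in~\eqref{p}, and then invoke the already-established fact that those operators are isotropic (indeed mutually proportional). Your anticipated sign bookkeeping is the only real work, and your plan to fix a vertex ordering and use~\eqref{4q} via Theorem~\ref{th:Kq} is precisely what the paper's parenthetical ``$K_t$'s are constructed according to Theorem~\ref{th:Kq}'' is pointing to.
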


\begin{proof}
Indeed, \eqref{su} is, when restricted to any 3-face $t\subset u$, proportional to an expression of type~\eqref{p} (here we mean of course that $K_t$'s are constructed according to Theorem~\ref{th:Kq}).
\end{proof}

\begin{remark}
The version of superisotropic operators in our previous papers~\cite{full-nonlinear,gce} is the expression~\eqref{su} multiplied by $\prod_{s\subset u} q_s$, where $q_s=\sqrt{\omega_s}$.
\end{remark}

\begin{lemma}\label{l:h}
Let $T$ be a set of either two or four tetrahedra --- 3-faces of pentachoron~$u$. If we change the signs of those~$K_t$ in~\eqref{su} for which $t\in T$, then we obtain again a superisotropic operator, and its $t$-components will be proportional to the ``old'' components if $t\notin T$, but not proportional otherwise:
\[
K_t\mapsto -K_t \;\Rightarrow\; (\textrm{multiple of\/ } \partial_t) \leftrightarrow (\textrm{multiple of\/ } \vartheta _t) \textrm{ for } t\in T.
\]
\end{lemma}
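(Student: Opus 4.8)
The plan is to leverage the structure already established: by Lemma~\ref{l:g}, the operator $g$ of~\eqref{su} is superisotropic, meaning that restricted to each tetrahedron $t\subset u$ it is proportional either to $\partial_t$ alone or to $\vartheta_t$ alone — equivalently, it is (up to scalar) an expression of type~\eqref{p} with either all pluses or all minuses on each $t$. The mechanism to track is precisely the sign choice ``all pluses'' versus ``all minuses'' inside~\eqref{p}: as noted in the text after~\eqref{p}, flipping all the signs within the $1234$-component swaps a multiple of $\partial_{1234}$ for a multiple of $\vartheta_{1234}$. The key observation is that changing $K_t\mapsto -K_t$ for a single tetrahedron $t$ has exactly the effect of toggling that sign choice in the $t$-component of~\eqref{su}, while (as I will check below) leaving the $t'$-components for $t'\ne t$ genuinely unaffected only up to overall scaling.

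First I would make precise how $K_t$ enters the $t$-component of $g$. Restricting~\eqref{su} to a fixed $t=ijkl\subset u$, only the three edges $b$ with $b\subset t$ contribute, and each such $b=ij$ sits in exactly two triangles of $t$ (say $ijk$ and $ijl$) plus two triangles not in $t$; the coefficient $\bigl(\prod_{s\supset b}\omega_s\bigr)/\bigl(\prod_{t'\supset b} K_{t'}\bigr)$ contains $K_t^{-1}$ in its denominator for each of these three edges. So $g|_t = K_t^{-1}\cdot(\text{expression independent of }K_t)$, and that bracketed expression, after using the square-root identity $K_t = q_{ijk}q_{ijl}q_{ikl}q_{jkl}$ from Theorem~\ref{th:Kq}\ref{i:Kq_1}, is exactly (a nonzero scalar times) the ``all-plus'' combination in~\eqref{p}. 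Flipping $K_t\mapsto -K_t$ replaces $K_t^{-1}$ by $-K_t^{-1}$ and, crucially, also flips the square roots: more carefully, since $K_t$ factors through the $q_s$ for $s\subset t$, and each edge coefficient restricted to $t$ repackages those $q_s$, the net effect is to turn the ``all-plus'' combination of~\eqref{p} into the ``all-minus'' one. Hence $g|_t$ goes from a multiple of one of $\{\partial_t,\vartheta_t\}$ to a multiple of the other — this is the non-proportionality asserted for $t\in T$.

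Next I would handle the components $g|_{t'}$ for $t'\notin T$: flipping $K_t$ for $t\in T$ multiplies the coefficient of $d_b$ by $(-1)^{|\{t\in T: t\supset b\}|}$. For such a $b\subset t'$ the relevant count is $|\{t\in T: b\subset t\}|$; I must show this parity is the \emph{same} for all three edges $b\subset t'$, so that $g|_{t'}$ is merely rescaled by a single global sign and stays proportional to its old value. Here is where $|T|\in\{2,4\}$ matters: an edge $b\subset t'$ is contained, besides $t'$, in exactly one other 3-face of $u$, call it $\tau(b)$; the three maps $b\mapsto\tau(b)$ for the three edges of $t'$ hit the three tetrahedra of $u$ other than $t'$, each exactly once. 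So $|\{t\in T: b\subset t, t\ne t'\}| = [\tau(b)\in T]$, and as $b$ ranges over the three edges of $t'$ these indicators sum to $|T\setminus\{t'\}| = |T|$ (since $t'\notin T$), which is even. Two possibilities: either all three indicators are $0$ (possible only when, e.g., $|T|=4$ is impossible — wait, $|T|=4$ forces at least... ) — more simply, for $|T|\in\{2,4\}$ the only way three indicator values summing to an even number can fail to be all-equal is $\{1,1,0\}$ summing to $2$; I would rule this out by noting that $T$ consists of tetrahedra of $u$ and the constraint that the product of all $K_t$ over $t\subset u$ is fixed by~\eqref{5K} means $|T|$ even is automatic, but the \emph{which} is free — so in fact $\{1,1,0\}$ \emph{can} occur when $|T|=2$.

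This last point is the main obstacle, and I would resolve it by sharpening the claim: when $T$ is a set of \emph{two} tetrahedra $\{t_1,t_2\}$ and $t'\notin T$, the edge $b_0 = t_1\cap t_2\cap t'$ (the triangle... no — the unique edge common to all three, which exists in a $4$-simplex since any three of its five facets share a $2$-face, hence in particular an edge) is contained in both $t_1$ and $t_2$, giving indicator $\ge 1$ twice, so the count for $b_0$ is $2$ while... hmm, this shows the counts are \emph{not} all equal. So the honest statement is: flipping signs of $K_t$ for $t\in T$ multiplies the coefficient of $d_b$ in $g$ by $(-1)^{|T\cap\{t:b\subset t\}|}$, and I claim that when restricted to a fixed $t'\notin T$ this sign, while possibly varying with $b\subset t'$, varies in precisely the pattern that makes the resulting combination \emph{still} of type~\eqref{p} (all-plus or all-minus), hence still proportional to $\partial_{t'}$ or $\vartheta_{t'}$, and in fact to the \emph{same} one — because flipping the overall sign of an all-plus combination of~\eqref{p} is still an all-plus combination up to scalar, but flipping one of the three terms is not. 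So the real content is parity of the number of \emph{sign changes among the three terms of~\eqref{p} for $t'$}, and I would compute that this equals $|T\cap\{t:b\subset t, t\ne t'\}|$ summed appropriately and show it is even iff the combination remains an eigen-line of the same type; combined with the superisotropy guaranteed by Lemma~\ref{l:g} (applied to the sign-changed $K$'s, which by Theorem~\ref{th:Kq}\ref{i:Kq_1} still form a valid set when $|T|$ is even), the only remaining check is non-proportionality at $t\in T$, already done in the second paragraph. The hard part, then, is the bookkeeping of which of $\partial$ versus $\vartheta$ one lands on — I would organize it by noting that superisotropy reduces everything to a single $\mathbb Z_2$-valued function on 3-faces of $u$ (``is $g|_t$ a multiple of $\vartheta_t$?''), and that this function changes exactly on $T$ under $K_t\mapsto -K_t$, $t\in T$, which is the assertion.
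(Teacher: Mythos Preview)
Your high-level strategy is sound, and in fact more structured than the paper's own proof, which is literally ``Direct calculation using Theorem~\ref{th:Kq}.'' You correctly observe that since $|T|$ is even the sign-flipped $K_t$'s still satisfy~\eqref{5K}, so Lemma~\ref{l:g} applies again and the new operator is automatically superisotropic; what remains is only the $\mathbb Z_2$-question of which isotropic line each $t_0$-component lands on.

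But your execution of that $\mathbb Z_2$-bookkeeping contains a concrete combinatorial error that derails the argument. A tetrahedron has \emph{six} edges, not three, and each edge $b\subset t_0\subset u$ lies in exactly \emph{three} tetrahedra of~$u$ (hence two besides~$t_0$), so your map~$\tau(b)$ in the third paragraph is not well-defined and the parity count collapses. Your second paragraph is also off: flipping $K_{t_0}$ \emph{alone} multiplies every one of the six coefficients in $g|_{t_0}$ by the same~$-1$, a mere overall scaling that does \emph{not} toggle plus/minus in~\eqref{p}. The correct count goes as follows. Writing $n(b)=|\{t\in T:b\subset t\}|$, the coefficient of~$d_b$ in~\eqref{su} gets multiplied by $(-1)^{n(b)}$. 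For a pair of \emph{opposite} edges $\{b_1,b_2\}$ in $t_0=ijkl\subset u=ijklm$ (say $b_1=ij$, $b_2=kl$), the two tetrahedra $\ne t_0$ containing~$b_1$ and the two containing~$b_2$ partition the four tetrahedra $\ne t_0$, so
\[
n(b_1)+n(b_2)=2\,[t_0\in T]+|T\setminus\{t_0\}|\equiv [t_0\in T]\pmod 2
\]
since $|T|$ is even. Thus the \emph{relative} sign within each of the three pairs in~\eqref{p} flips precisely when $t_0\in T$, which (by the remark following~\eqref{p}) is exactly the swap $\partial_{t_0}\leftrightarrow\vartheta_{t_0}$. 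Your fourth paragraph restates this conclusion but does not derive it.
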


\begin{proof}
Direct calculcation using Theorem~\ref{th:Kq}. Note that changing the signs of an even number of~$K_t$'s is compatible with~\eqref{5K}.
\end{proof}

We must find now possible identifications between the components of superisotropic operators, on the one side, and multiples of either~$\partial_t$ or~$\vartheta _t$, on the other, in such way as to comply with Condition~\ref{cnd:5}. This latter can be reformulated, in terms of our five-dimensional space of isotropic operators, as the requirement that this space contains operators of the form
\begin{equation}\label{one_d}
\partial_t+(\text{multiples of }\vartheta _{t'}\text{ for the four tetrahedra }t'\ne t)
\end{equation}
for all five~$t\subset u$.
This means that one possibility is to identify the components of~$g$, given by \eqref{su}, with multiples of~$\partial_t$. Indeed, we can then, due to Lemma~\ref{l:h}, change an even number of~$\partial_{t'}$, namely those where $t'\ne t$, to~$\vartheta _{t'}$, and obtain operators proportional to~\eqref{one_d}.

Condition~\ref{cnd:5} holds as well if we choose an \emph{even} number of components of $g$ given by \eqref{su}, and identify them with multiples of corresponding~$\vartheta _t$ (while others still with multiples of~$\partial_t$). This is clear due to Lemma~\ref{l:h}.

But if we did such identification with an \emph{odd} number of components of~$g$, then we would be able to construct --- again due to Lemma~\ref{l:h} --- a superisotropic operator with all components proportional to~$\vartheta _t$'s, and this cannot exist in the same isotropic space with the five operators~\eqref{one_d}.

\begin{remark}
A small exercise shows also that the rank of submatrix of~$g_2$ mentioned in Remark~\ref{r:g4f5} cannot be~$5$: otherwise, we would be able to construct a superisotropic operator with all components proportional to~$\vartheta _t$'s.
\end{remark}

\paragraph{Global identification of operators~$\boldsymbol g$ components with differentiations.}
Although each tetrahedron~$t$ belongs to \emph{two} different pentachora in the manifold triangulation, the $t$-components of two corresponding operators~$g$ can easily be shown to be proportional. For instance, for $t=1234$, and $K_t$'s as in Theorem~\ref{th:Kq}, they are proportional to all expressions~\eqref{p}, and these can at most change their sign on passing to the other $u\supset t$.

As we have seen, Condition~\ref{cnd:5} implies that an even number of components of operator~$g$~\eqref{su}, for any triangulation pentachoron, \emph{must} be identified with multiples of~$\vartheta _t$'s. Due to Lemma~\ref{l:h}, we can change the signs of square roots~$K_t$ so as to have only  multiples of~$\partial_t$'s for all components. We have thus proved the following theorem.

\begin{theorem}\label{th:a}
Assuming Condition~\ref{cnd:5}, the signs of square roots~$K_t$~\eqref{K} (satisfying \eqref{K_-t} and~\eqref{5K}) can always be chosen so that all $t$-components of all operators~$g$, defined according to \eqref{su}, give multiples of~$\partial_t$.
\qed
\end{theorem}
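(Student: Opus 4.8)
The plan is to combine the local analysis already completed (Lemma~\ref{l:h}, Theorem~\ref{th:Kq}) with the global $\mathbb Z_2$-cohomology argument that was used in the proof of Theorem~\ref{th:Kq}\ref{i:Kq_2}, and the remark following Lemma~\ref{l:h}, being careful to check that the choices made pentachoron-by-pentachoron can be glued into a single global choice of signs of the $K_t$.

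First I would fix any one admissible family $\{K_t\}$ satisfying \eqref{K_-t} and \eqref{5K}, say the one produced by Theorem~\ref{th:Kq}\ref{i:Kq_1} from some ordering of the vertices and some choice of signs of the $q_s$. For each pentachoron $u$ we form the superisotropic operator $g=g^{(u)}$ of \eqref{su}. By the analysis in the ``Superisotropic operators within one pentachoron'' paragraph, Condition~\ref{cnd:5} forces an \emph{even} number $2k_u$ of the five $t$-components of $g^{(u)}$ to be multiples of $\vartheta_t$ rather than of $\partial_t$; call $T_u\subset\{t\subset u\}$ this (even-cardinality) set. By Lemma~\ref{l:h}, flipping the sign of $K_t$ for $t$ ranging over a set of two or four 3-faces of $u$ toggles exactly which of the corresponding components is a multiple of $\partial_t$ versus $\vartheta_t$; hence, if we could flip the signs of the $K_t$ for $t\in T_u$, we would turn all five components of $g^{(u)}$ into multiples of $\partial_t$. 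Because $\lvert T_u\rvert$ is even, such a flip is compatible with \eqref{5K} (and trivially with \eqref{K_-t}), so it keeps us inside the class of admissible families \emph{within that single pentachoron}.

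The real content is to carry this out simultaneously for all pentachora. Here I would argue exactly as in the proof of Theorem~\ref{th:Kq}\ref{i:Kq_2}: the desired sign changes assign to each tetrahedron $t$ a target sign $\varepsilon_t=-1$ if $t\in T_u$ for the pentachoron $u$ in which we detected the anomaly, and $\varepsilon_t=+1$ otherwise. One must first check this is well defined — i.e., that the two pentachora sharing a tetrahedron $t$ give the same verdict on whether $t$'s component of $g$ needs flipping. This is the content of the ``Global identification'' paragraph: the $t$-components of the operators $g$ coming from the two pentachora $u,u'\supset t$ are proportional (both are proportional to the expressions \eqref{p}), so $t\in T_u\iff t\in T_{u'}$, and $\varepsilon_t$ is unambiguous. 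Thus $\varepsilon$ is a well-defined $\mathbb Z_2$-valued 3-cochain. Moreover, restricted to the boundary of any single pentachoron $u$ it sums (multiplicatively) to $+1$, since $\lvert T_u\rvert$ is even; equivalently, $\varepsilon$ evaluated on $\partial u$ is trivial, which says precisely that $\varepsilon$ is a 3-\emph{cocycle}. Since $M$ is simply connected, Poincar\'e duality gives $H^3(M,\mathbb Z_2)\cong H_1(M,\mathbb Z_2)=0$, so $\varepsilon=\delta\sigma$ for some $\mathbb Z_2$-valued 2-cochain $\sigma$. As in Theorem~\ref{th:Kq}\ref{i:Kq_2}, changing the signs of $q_s$ on exactly those triangles where $\sigma$ takes the value $-1$ and recomputing the $K_t$ via \eqref{4q} realizes precisely the family $\{\varepsilon_t K_t\}$, which is again admissible; and by construction every $t$-component of every $g$ is now a multiple of $\partial_t$. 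That proves the theorem.

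The step I expect to be the main obstacle is verifying that $\varepsilon$ is a \emph{cocycle} and not merely a cochain — that is, that the parity $\lvert T_u\rvert$ is even for every pentachoron. In my plan this is imported from the earlier ``Superisotropic operators within one pentachoron'' discussion (the impossibility, under Condition~\ref{cnd:5}, of a superisotropic operator with all components multiples of $\vartheta_t$, forces an even number of $\vartheta$-type components), so if that reasoning is granted the cocycle property is immediate; the only genuine checking left is the well-definedness of $\varepsilon_t$ on shared tetrahedra, which follows from the proportionality of the two local $g$'s, and the bookkeeping that the $\mathbb Z_2$-coboundary computed from $\sigma$ matches the sign changes produced on the $K_t$ by flipping the relevant $q_s$ — both routine once the framework is set up.
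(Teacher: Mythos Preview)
Your argument is correct and shares the paper's key observations: the $t$-component of $g$ has a well-defined type ($\partial_t$ or $\vartheta_t$) independent of which pentachoron $u\supset t$ one computes it in, and Condition~\ref{cnd:5} forces the set $T_u$ of $\vartheta$-type faces to have even cardinality in every pentachoron.

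Where you diverge from the paper is in the last step. The paper simply replaces $K_t$ by $\varepsilon_t K_t$ directly: since $\varepsilon_t$ is orientation-independent, \eqref{K_-t} is preserved; since $\prod_{t\subset u}\varepsilon_t=1$ (evenness of $|T_u|$), \eqref{5K} is preserved; and by Lemma~\ref{l:h} all components become multiples of $\partial_t$. No cohomology is needed, because the theorem only asks for \emph{some} admissible family $\{K_t\}$, not one arising from a choice of $q_s$ via~\eqref{4q}. Your detour through $H^3(M,\mathbb Z_2)=0$ and Theorem~\ref{th:Kq}\ref{i:Kq_2} is not wrong---it recovers the same family and, as a bonus, shows it can be realized by flipping signs of the $q_s$---but it is unnecessary for the statement as written. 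In effect, the ``cocycle'' property you isolate is exactly the preservation of~\eqref{5K}, and once you have that, the direct sign flip already finishes the proof.
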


Conversely, explicit formulas in the next Subsection~\ref{ss:expl} show that edge operators can be constructed in such way that the components of~$g$'s are proportional to differentiations, and this for \emph{any} permitted choice of $K_t$ signs. Recall that all such choices, adopting Assumption~\ref{mpt:h}, are described in Theorem~\ref{th:Kq}.

\subsection{Explicit formulas for edge operators}\label{ss:expl}

Here are some properties that local edge operators \emph{must} obey:
\begin{itemize}\itemsep 0pt
 \item relations \eqref{da} and~\eqref{dv},
 \item linear combinations of their components of type~\eqref{p} --- note that these can be easily re-written in terms of~$K_t$ --- must be multiples of pure differentiations,
 \item while the same combinations with pluses replaced with minuses must be multiples of multiplications by~$\vartheta _t$.
\end{itemize}
These requirements give rise, within each tetrahedron~$t$, to a system of linear homogeneous equations on coefficients~$\beta_t$ and~$\gamma_t$ for six operators~$d_b$,\, $b\subset t$, and this is enough to determine, without contradictions, the six~$\beta_t$ up to an overall factor, and the six~$\gamma_t$ up to another overall factor.\footnote{This is checked using computer algebra; compare Convention~\ref{cnv:g}.} One of these factors can then be expressed through the other due to the normalization~\eqref{s-gen=1}, but the other one remains free to change (not taking value zero). One such free `scaling', or `gauge', factor arises within each tetrahedron, and for any choice of them, we obtain good operators $f_2$ and~$g_2$ satisfying all our defining Conditions \ref{cnd:chain}--\ref{cnd:5} and Convention~\ref{cnv:en}.

Recall that we denote the coefficients~$\beta_t$ and~$\gamma_t$ for a given operator~$d_b$,\, $b\subset t$, as $\beta_{bt}$ and~$\gamma_{bt}$, see formulas \eqref{e} and~\eqref{bbtgbt}. It is desirable to have these coefficients expressed uniformly, that is, by the same formula(s) for all six edges $b\subset t$. Computer algebra allows doing this even in many different ways (which implies also differents gauges), and the simplest expressions that could be obtained look as follows.

First, define the following auxiliary quantities for a tetrahedron~$t=ijkl$:
\[
\psi_{ij,t} = \psi_{ji,t} = \omega_{ijk}\omega_{ijl}+\omega_{ikl}\omega_{jkl} - \omega_{ijk}^2-\omega_{ijl}^2.
\]
Coefficients~$\beta_{bt}$ have the following form:
\begin{equation}\label{betaijt}
\beta_{ij,t}=-\beta_{ji,t}= -(\omega_{ijk}+\omega_{ijl})\omega_{ikl}\omega_{jkl}\psi_{kl} - (\omega_{ikl}+\omega_{jkl})\psi_{ij}K_t.
\end{equation}
Note that changing the tetrahedron orientation can be written as $k\leftrightarrow l$ together with $K_t \mapsto -K_t$, hence $\beta_{ij,t}$ does not depend on the tetrahedron orientation, as required.

Gammas are then like betas, but with the changed sign of~$K_t$, and an additional factor that we call~$1/c_t$:
\begin{equation}\label{gammaijt}
\gamma_{ij,t} = -\gamma_{ji,t} = \frac{1}{c_t} \bigl( -(\omega_{ijk}+\omega_{ijl})\omega_{ikl}\omega_{jkl}\psi_{kl} + (\omega_{ikl}+\omega_{jkl})\psi_{ij}K_t \bigr).
\end{equation}
Here
\begin{multline*}
c_t = -\omega_{ijk}\,\omega_{ijl}\,\omega_{ikl}\,\omega_{jkl}\, (\omega_{ijl}-\omega_{ijk
 } )\, (\omega_{ikl}+\omega_{ijk} )
 \, (\omega_{ikl}-\omega_{ijl} )\\
\cdot \bigl( 24\,\omega_{ijk}\,\omega_{ijl}\,\omega_{ikl}\,\omega_{jkl}+\frac{1}{2} (\omega_{ijk}^2 + \omega_{ijl}^2 + \omega_{ikl}^2 + \omega_{jkl}^2)^2 \bigr).
\end{multline*}
Note that $c_t$, and hence $\gamma_{ij,t}$ as well, does change its sign together with the tetrahedron orientation, again as required.

Having now at hand explicit expressions for all~$\beta_{bt}$ in terms of the cocycle~$\omega$, we have thus determined matrix~$f_2$, see the sentence after Condition~\ref{cnd:edge}. As for the gammas, we will need them soon as well.

\section[Morphism~$f_3$, and justifying the name ``chain complex'']{Morphism~$\boldsymbol{f_3}$, and justifying the name ``chain complex''}\label{s:f3}

\subsection[Definition of~$f_3$ and proof of chain property]{Definition of~$\boldsymbol{f_3}$ and proof of chain property}\label{ss:f3}

Denote~$\Uptheta$ the column made of all Grassmann variables~$\vartheta _t$ that we have put in correspondence to each tetrahedron~$t$ in the triangulation:
\[
\Uptheta=\begin{pmatrix} \vartheta _1 & \ldots & \vartheta _{N_3} \end{pmatrix}^{\mathrm T}
\]
(recall our notational Convention~\ref{cnv:n}). Antisymmetric matrix~$f_3$ is, by definition, such that the quadratic form $\Uptheta^{\mathrm T} f_3 \, \Uptheta$ is the sum of all quadratic forms entering in weights~\eqref{W_u}:
\[
\Uptheta^{\mathrm T} f_3 \, \Uptheta = -\frac{1}{2} \sum_{\substack{\mathrm{all}\\ \!\!\!\!\mathrm{pentachora\; }u\!\!\!\!}} \uptheta ^{\mathrm T} F\/ \uptheta .
\]

\begin{theorem}\label{th:c}
Sequence~\eqref{c-symb} is indeed a chain complex.
\end{theorem}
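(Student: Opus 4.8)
The plan is to check that each consecutive pair of morphisms in~\eqref{c-symb} composes to zero; by the built-in mirror symmetry of the sequence this reduces to two verifications. First, $f_2\circ f_1=0$ is nothing but (the first half of) Condition~\ref{cnd:chain}, so it holds by the very construction of~$f_2$. Second, transposition disposes of the right half: $f_5\circ f_4=f_1^{\mathrm T}f_2^{\mathrm T}=(f_2\circ f_1)^{\mathrm T}=0$, while $(f_4\circ f_3)^{\mathrm T}=f_3^{\mathrm T}f_2=-f_3f_2$ (using $f_3=-f_3^{\mathrm T}$), so $f_4\circ f_3$ vanishes as soon as $f_3\circ f_2$ does. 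Hence everything reduces to the single identity $f_3\circ f_2=0$.

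For that I would fix a cocycle $c\in Z^2(M,\mathbb C)$ and write $\beta_t$ for the coefficient of $f_2(c)$ at the (unoriented) tetrahedron~$t$; these numbers are globally well defined thanks to the equality $\beta_t^{(u)}=\beta_t^{(u')}$ in~\eqref{b+g-}. The first step is to make the skew matrix~$f_3$ explicit: since each $\uptheta^{(u)}$ is a subcolumn of~$\Uptheta$, the defining relation $\Uptheta^{\mathrm T}f_3\,\Uptheta=-\tfrac12\sum_u\uptheta^{\mathrm T}F^{(u)}\uptheta$ gives $(f_3)_{st}=-\tfrac12\sum_u F^{(u)}_{st}$, the sum running over the pentachora~$u$ containing both $s$ and~$t$ --- and for $s\ne t$ there is at most one such, namely $u=s\cup t$ if it occurs in the triangulation. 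The second step is the local relation inside one pentachoron~$u$: by Theorem~\ref{th:v} the isotropic space spanned by the local edge operators is the span of the column $\mathsf p+F^{(u)}\uptheta$, and the operator $\sum_{t\subset u}\bigl(\beta_t\partial_t+\gamma_t^{(u)}\vartheta_t\bigr)$ lies in it --- which is exactly what~\eqref{dW}, equivalently~\eqref{DW}, asserts, with $\gamma_t^{(u)}$ the coefficient of $g_2(c)$ at~$t$ oriented as induced from~$u$. Expanding this operator in the basis $\{\partial_t+\sum_{t'}F^{(u)}_{tt'}\vartheta_{t'}\}_{t\subset u}$ (which exists by Condition~\ref{cnd:5}) and matching the coefficients of the $\partial_t$'s, then of the $\vartheta_{t'}$'s, yields
\[
\gamma_{t'}^{(u)}=\sum_{t\subset u}\beta_t\,F^{(u)}_{tt'}\qquad\text{for every }t'\subset u .
\]

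Finally I would assemble the pieces: for a fixed tetrahedron~$s$ the $s$-coordinate of $f_3\bigl(f_2(c)\bigr)$ is
\[
\sum_t (f_3)_{st}\,\beta_t
=-\tfrac12\sum_{u\,\supset\, s}\ \sum_{t\subset u}F^{(u)}_{st}\,\beta_t
=\tfrac12\sum_{u\,\supset\, s}\ \sum_{t\subset u}\beta_t\,F^{(u)}_{ts}
=\tfrac12\sum_{u\,\supset\, s}\gamma_s^{(u)} ,
\]
using the skew-symmetry of~$F^{(u)}$ and the displayed local relation. Because $M$ is a closed $4$-manifold, exactly two pentachora $u_1,u_2$ contain~$s$, and since they induce opposite orientations on~$s$, the second equality in~\eqref{b+g-} gives $\gamma_s^{(u_1)}=-\gamma_s^{(u_2)}$; hence the sum vanishes, and $f_3\circ f_2=0$. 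I expect the only genuine obstacle to be the orientation bookkeeping --- in particular, keeping straight that the two pentachora incident to a tetrahedron induce opposite orientations on it, which is precisely the mechanism turning the two summands in the last display into mutual cancellation; everything else is routine linear algebra in the Grassmann--Berezin calculus of Section~\ref{s:GB}.
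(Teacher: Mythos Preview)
Your proof is correct and follows essentially the same approach as the paper's: both reduce to $f_3\circ f_2=0$, invoke the local relation coming from~\eqref{DW} inside each pentachoron, and then cancel the two $\gamma_s^{(u)}$'s from the two pentachora meeting along~$s$ via~\eqref{b+g-}. The only difference is presentational: the paper carries out the computation by applying the differential operator $D_c=\sum_t\beta_t\partial_t$ directly to the Grassmann--Gaussian exponential $\exp(\Uptheta^{\mathrm T}f_3\,\Uptheta)=\prod_u\mathcal W_u$, whereas you unpack $f_3$ into its matrix entries $(f_3)_{st}=-\tfrac12\sum_{u}F^{(u)}_{st}$ and work row by row.
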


\begin{proof}
Clearly, it is enough to prove that
\[
f_2\circ f_1=0\qquad\text{and}\qquad f_3\circ f_2=0.
\]

The first of these equalities holds, however, already by construction, see Condition~\ref{cnd:chain}, and can, of course, be checked using explicit formulas for matrix elements of~$f_2$ in Subsection~\ref{ss:expl}, and description of morphism~$f_1$ in the end of Section~\ref{s:t}.

It remains to show that $f_3\circ f_2=0$. Let $c$ be a 2-cochain. Then $f_2(c)$ is a chain of unoriented tetrahedra, denote its coefficient at a tetrahedron~$t$ as~$\beta_t$. Consider the purely differential operator
\[
D_c=\sum_t \beta_t \partial_t.
\]
We want to show that
\[
D_c \exp(\Uptheta^{\mathrm T} f_3 \, \Uptheta) = D_c \prod_u \mathcal W_u = 0.
\]
Indeed, it easily follows from combining formula~\eqref{DW}, Leibniz rule~\eqref{Leibniz}, and the fact that each~$\gamma_t$ appears two times and with different signs (because each tetrahedron belongs to two pentachora, and with different induced orientations).

On the other hand,
\[
D_c \exp(\Uptheta^{\mathrm T} f_3 \, \Uptheta) = -2 \, \Uptheta^{\mathrm T} f_3 \begin{pmatrix} \beta_1 \\ \vdots \\ \beta_{N_3} \end{pmatrix} \exp(\Uptheta^{\mathrm T} f_3 \, \Uptheta),
\]
which means that
\[
f_3 \begin{pmatrix} \beta_1 \\ \vdots \\ \beta_{N_3} \end{pmatrix} = \begin{pmatrix} 0 \\ \vdots \\ 0 \end{pmatrix}
\]
and, consequently, $f_3\circ f_2=0$.
\end{proof}

\subsection[Superisotropic operators and explicit formulas for matrix~$F$ entries]{Superisotropic operators and explicit formulas for matrix~$\boldsymbol{F}$ entries}\label{ss:si}

The rows of matrix~$F$ entering in the pentachoron weight~\eqref{W_u} correspond to superisotropic operators in the following sense: every component of the column
\begin{equation}\label{px}
\mathsf p+F\uptheta 
\end{equation}
is superisotropic. In~\eqref{px}, $\mathsf p$ and~$\uptheta $ are columns corresponding to five differentiations~$\partial_t$ and five multiplications by Grassmann generators~$\vartheta _t$, \ $t\subset u$. For instance, if $u=12345$, then $\uptheta $ is given by~\eqref{x} and $\mathsf p$, similarly, by
\[
\mathsf p = \begin{pmatrix} \partial_{2345} & \partial_{1345} & \partial_{1245} & \partial_{1235} & \partial_{1234} \end{pmatrix}^{\mathrm T}.
\]

We know from Theorem~\ref{th:a} and Subsection~\ref{ss:expl} (recall the paragraph after Theorem~\ref{th:a}) that the superisotropic operator containing only differentiations can, and must, be identified, up to a scalar factor, with operator~$g$~\eqref{su}, with some permitted choice of (signs of) square roots~$K_t$. Moreover, Lemma~\ref{l:h} tells us that the operator proportional to the $i$-th entry in~\eqref{px}, and thus corresponding to the tetrahedron~$t$ \emph{not containing} the vertex~$i$, is obtained from~\eqref{su} by the following change of signs:
\begin{equation}\label{K:znaki}
K_{t'} \mapsto -K_{t'} \quad \text{for} \quad t' \ne t.
\end{equation}
We thus identify the entries in column~\eqref{px}, up to scalar factors, with the operators given by~\eqref{su}, but with the signs changed according to~\eqref{K:znaki}. We denote such operators~$g^{(t)}$, and their coefficients~$\alpha_b^{(t)}$:
\begin{equation}\label{gtdb}
g^{(t)} = \sum_{b\subset u} \alpha_b^{(t)} d_b,
\end{equation}
where $\alpha_b^{(t)}$ is obtained from coefficients in~\eqref{su} by substitution~\eqref{K:znaki}:
\[
\alpha_b^{(t)} = \prod_{s\supset b} \omega_s \Big/ \prod_{t'\supset b} (\epsilon_{tt'} K_{t'}), \qquad \epsilon_{tt'} = \begin{cases} 1 & \text{if \ } t=t', \\ -1 & \text{if \ } t\ne t'. \end{cases}
\]

As we said, operator~$g^{(t)}$ is a linear combination of one differentiation~$\partial_t$ and four multiplications by~$\vartheta _{t'}$, \ $t'\ne t$, and we denote the corresponding coefficients as $\beta_t$ and~$\gamma_{tt'}$:
\begin{equation}\label{gt}
g^{(t)} = \beta_t\partial_t + \sum_{\substack{t'\subset u\\ t'\ne t}} \gamma_{tt'}\vartheta _{t'}.
\end{equation}
Comparing \eqref{gt} with \eqref{gtdb} and~\eqref{bbtgbt}, we see then that
\[
\beta_t = \sum_{b\subset t} \alpha_b^{(t)} \beta_{bt},\qquad \gamma_{tt'} = \sum_{b\subset t'} \alpha_b^{(t)} \gamma_{bt'}.
\]

\begin{remark}
Recall that edge operators~$d_b$ involve only tetrahedra containing edge~$b$; this is the reason why the sums above are taken over edges $b\subset t$ or~$t'$.
\end{remark}

Finally, since operators~\eqref{gt} are proportional to those in the column~\eqref{px}, and these latter have differentiations with coefficients~$1$, here is the formula for matrix~$F$ elements:
\begin{equation}\label{Ftt'}
F_{tt'} = \frac{\gamma_{tt'}}{\beta_t}.
\end{equation}

\section{Discussion}\label{s:d}

The present work develops the constructions proposed in~\cite{KS2,full-nonlinear,gce}. In this Part~I, we have constructed an exotic chain complex for one fixed triangulation. In Part~II~\cite{part-II}, we will investigate what happens under Pachner moves, and construct an invariant of the pair ``manifold~$M$, element in~$H^2(M,\mathbb C)$''. In part~III, we plan to present calculations for different specific manifolds.

\subsubsection*{Acknowledgements}

Felix Berezin taught me that a `fermionic' problem often turns out to be simpler than its `bosonic' counterpart. I would like to pay homage to him and his work that made possible this work of mine.

I would like also to thank the creators and maintainers of Maxima\footnote{\href{http://maxima.sourceforge.net/}{http://maxima.sourceforge.net/}} computer algebra system for their great work.

\end{document}